\mathchardef\mhyphen="2D % Define a "math hyphen"
\newcommand{\vect}[1]{\boldsymbol{#1}}
\newcommand{\calY}{\mathcal{Y}}
\newcommand{\ent}[1]{{\Gamma^*_{#1}}}
\newcommand{\aent}[1]{{\overline{\Gamma^*_{#1}}}}
\newcommand{\polym}[1]{{\Gamma_{#1}}}
\newcommand{\mono}[1]{{\mathcal{M}_{#1}}}
\newcommand{\step}[1]{\mathrm{S}_{#1}}
\newcommand{\tup}{\mathrm{Tup}}
\newcommand{\ubound}[1]{\mathrm{Bound}_{#1}}
\newcommand{\logubound}[1]{\mathrm{Log\text{-}Bound}_{#1}}
\newcommand{\Deg}{\mathrm{Deg}}
\newcommand{\Simple}{\mathrm{Simple}}
\newcommand{\modular}[1]{\mathrm{Mod}_{#1}}
\newcommand{\normal}[1]{\mathrm{N}_{#1}}
\newcommand{\monotone}[1]{\mathrm{Mon}_{#1}}
\newcommand{\att}{\mathsf{Var}}
\newcommand{\val}{\mathsf{Val}}
\newcommand{\iip}{{\mathsf{IIP}}}
\newcommand{\iipb}[2]{\iip_{#1}(#2)}
\newcommand{\iips}[1]{\iip_{#1}}
\newcommand{\Dom}{\mathrm{Dom}}
\def\BState{\State\hskip-\ALG@thistlm}
\title{Information inequality problem over set functions} %TODO Please add
\author{Miika {Hannula\footnote{The author has been supported by the ERC grant 101020762.}}}{University of Helsinki\\Finland\\\texttt{miika.hannula@helsinki.fi}}{}{}{}%TODO mandatory, please use full name; only 1 author per \author macro; first two parameters are mandatory, other parameters can be empty. Please provide at least the name of the affiliation and the country. The full address is optional. Use additional curly braces to indicate the correct name splitting when the last name consists of multiple name parts.
\authorrunning{M. Hannula} %TODO mandatory. First: Use abbreviated first/middle names. Second (only in severe cases): Use first author plus 'et al.'
\begin{document}

\maketitle

%TODO mandatory: add short abstract of the document
\begin{abstract}
Information inequalities appear in many database applications such as query output size bounds, query containment, and implication between data dependencies. 
Recently Khamis et al. \cite{KhamisK0S20} proposed to study the algorithmic aspects of information inequalities, including the information inequality problem: decide whether a linear inequality over entropies of random variables is valid. While the decidability of this problem is a major open question, applications often involve only inequalities
that adhere to specific syntactic forms linked to useful semantic invariance properties. This paper studies the information inequality problem in different syntactic and semantic scenarios that arise from database applications. Focusing on the boundary between tractability and intractability, we show that the information inequality problem is $\coNP$-complete if restricted to normal polymatroids, and in polynomial time if relaxed to monotone functions. We also examine syntactic restrictions related to query output size bounds, and provide an alternative proof, through monotone functions, for the polynomial-time computability of the entropic bound over simple sets of degree constraints.
\end{abstract}

\section{Introduction}
Information inequalities are linear constraints on entropies of random variables.  Often referred to as the laws of information, these inequalities describe what is not possible in information theory. %The study of information inequalities has a rich history in information theory. 
More than three decades ago, Pippenger %named constraints on entropies \emph{laws of information theory}, and 
 asked whether all such laws follow from the \emph{polymatroidal axioms} \cite{pippenger86}, depicted in Fig. \ref{fig:polym}.
%We may identify an entropic function $h$ with an entropic vector $\vec{h} = 
%$\vec{h}= (h_\alpha)_{\alpha \subseteq [n]}\in \mathbb{R}^{2^n-1}$, where $h_\alpha = H(X_\alpha)$. 
% These axioms are inequality constraints of the form (i) $H(\emptyset) =0$, (ii)  $H(X) \leq H(X \cup Y)$ (\emph{monotonicity}), and (iii) $H(X) + H(X \cup Y \cup Z) \leq H(X \cup Y) + $ $H(X \cup Z)$ (\emph{submodularity}). 
 The polymatroidal axioms are also known to be equivalent to the non-negativity of Shannon's information measures, which consist of entropy, conditional entropy, mutual information, and conditional mutual information. The inequality constraints derivable from the polymatroidal axioms are hence called \emph{Shannon} inequalities.
  Pippenger's question was famously answered in the negative by  Zhang and Yeung who were the first to find a non-Shannon information inequality that is valid over entropies \cite{641561}.
 Zhang and Yeung's proof was based on a novel innovation, identified as the  \emph{copy lemma} in \cite{dougherty2011nonshannon}, which still today remains essentially the only tool to establish novel non-Shannon inequalities \cite{GurpinarR19}.

 Constraints on entropies are  known to have many applications in database theory. % many of which have been found in recent years.
 Lee \cite{Lee87,Lee87a} observed already in the 80s that database constraints can alternatively be expressed as equalities over information measures. 
  More recently, the implication problem for data dependencies has been connected to validity of information inequalities \cite{KenigS22}, information theory has been used to analyze normal forms in relational and XML data models \cite{ArenasL05}, and query containment for conjunctive queries under bag semantics---a notoriously difficult problem to study---has been proven to be equivalent in certain special cases to checking information inequalities involving maximum. Perhaps the most fruitful application has been the use of information inequalities to obtain tight output size bounds for database queries \cite{AtseriasGM13,GottlobLVV12,GroheM14,KhamisNS16,Khamis0S17,ngoarxiv22}, and the subsequent development of \emph{worst-case optimal join algorithms} that run in time proportional to these bounds \cite{KhamisNS16,Khamis0S17,Ngopods18, NgoPRR18}.

Recently Khamis et al. \cite{KhamisK0S20} initiated the study of the algorithmic properties of information inequalities. The most central problem, called the {information inequality problem}, is to decide whether a given information inequality is valid over all entropic functions. The decidability of this problem is a major open question in the foundations of information theory. It was shown in \cite{KhamisK0S20} that checking the validity of monotone Boolean combinations of information inequalities (including the aforementioned \emph{max-inequalities}) is co-recursively enumerable (co-r.e.). Since the implication problem for conditional independence implication is undecidable \cite{Li23}, validity for general Boolean combinations of information inequalities is known to be undecidable. 
While the focus of \cite{KhamisK0S20} was on generalizations of the information inequality problem, this paper shifts attention to simplifications of the problem. Many applications, such as implication problems or query output size bounds, are related to information inequalities that adhere to specific syntactic forms. 
These syntactic forms  are also often linked to semantic invariance properties which render the associated problems computable and sometimes even tractable. 
Identifying  factors that make the information inequality problem either easy or hard is thus a task that can prove beneficial in multiple application scenarios.

This paper examines the information inequality problem with respect to different syntactic restrictions and semantic settings, focusing in particular on the boundary between tractable and intractable cases. We demonstrate that different factors, including the influence of the coefficients and the expressiveness of the information measures, give rise to $\coNP$-completeness with respect to normal polymatroids (the subset of entropic functions associated with a non-negative I-measure \cite{KenigS22, yeung08}), and disagreement between normal polymatroids and entropic functions. Our findings also reveal that when we relax the semantics to monotone functions or restrict it to modular functions (an implicit result in existing literature), the information inequality problem can be solved in polynomial time. Additionally, we demonstrate that this problem becomes polynomial-time solvable when we impose syntactic restrictions linked to cases where computing the entropic query output size bound is known to be in polynomial time.
Finally, we identify a syntactic restriction over which monotone and entropic functions agree, leading to an alternative proof for the previously established fact \cite{ngoarxiv22} that the entropic bound is polynomial-time computable over simple sets of degree constraints.

\section{Preliminaries}
We write $[n]$ for the set of integers $\{1, \dots, n\}$.
We use boldface letters to denote sets and sequences interchangeably. 
%If $\vect{X}$ is introduced as a set, we may later on write $\vect{X}$ to denote an arbitrary sequence listing all elements of $\vect{X}$. Conversely, if $\vect{X}$ is introduced as a sequence, we may subsequently write $\vect{X}$ for the set consisting of all elements listed in $\vect{X}$. Whether $\vect{X}$ refers to a sequence or a set will be always clear from the context. 
 For two sets (resp. sequences) $\vect{X}$ and $\vect{\vect{Y}}$, we write $\vect{X}\vect{\vect{Y}}$ to denote their union (resp. concatenation). 
 %If $f$ is a function whose domain includes every element of a sequence $\vect{X}=(A_1, \dots ,A_n)$, we write $f(\vect{X})$ for the sequence $(f(A_1), \dots ,f(A_n))$.
If $A$ is an individual element, we sometimes write $A$ instead of $\{A\}$ to denote the singleton set consisting of $A$. 
%Consequently, for two elements $A$ and $B$, the notation $AB$ refers either to the set $\{A,B\}$ or the sequence $(A,B)$, depending on the context.

\subsection{Relational databases}
Fix disjoint countably infinite sets $\att$ and $\val$ of variables and values. Each variable $A \in \att$
is associated with a subset of $\val$, called the \emph{domain} of $A$, denoted $\Dom(A)$. For a vector $\vect{X}=(A_1, \dots ,A_n)$ of variables, 
we write $\Dom(\vect{X})$ for the Cartesian product $\Dom(A_1)\times \dots \times \Dom(A_n)$.
Given a finite set of variables $\vect{X}$, an $\vect{X}$\emph{-tuple} is a mapping $t:\vect{X} \to \val$ such that $t(A) \in \Dom(A)$.
We write $\tup(\vect{X})$ for the set of all $\vect{X}$-tuples. 
%Note that $\tup(\emptyset)$ is a singleton set 
%consisting of the empty tuple. 
 For $\vect{Y}\subseteq \vect{X}$, the \emph{projection $t[\vect{Y}]$ of $t$ on} $\vect{Y}$
is the unique $\vect{Y}$-tuple that agrees with $t$ on $\vect{X}$. 
%In particular, $t[\emptyset]$
%is always the empty tuple. %restriction of $t$ on $\vect{Y}$.
%
A \emph{relation $R$} over $\vect{X}$ is a subset of $\tup(\vect{X})$.  
The variable set $\vect{X}$ is also called 
\emph{(relation) schema of $R$}. We sometimes write $R(\vect{X})$ instead of $R$ to
emphasize that $\vect{X}$ is the schema of $R$. For $\vect{Y}\subseteq \vect{X}$, 
the \emph{projection} of $R$ on $\vect{Y}$, written $R[\vect{Y}]$, is the set of all projections $t[\vect{Y}]$ where
$t\in R$. 
A \emph{database} is a finite collection of relations $D=\{R^D_1(\vect{X}_1), \dots ,R^D_n(\vect{X}_n)\}$. If $D$ is clear from the context, we drop the superscript $D$ from $R^D_i(\vect{X}_i)$.
%In the following we list some basic properties of $K$-relations.
We assume in this paper that each relation is finite.

\subsection{Information theory}
Let $X$ be a random variable associated with a finite domain $D=\Dom(X)$ and a probability distribution $p:D \to [0,1]$, where $\sum_{a\in D} p(a)=1$.
The \emph{entropy} of $X$  is defined as
\begin{equation}\label{eq:entropy}
H(X) \coloneqq -\sum_{x\in D} p(x) \log p(x).
\end{equation}
Entropy is non-negative and does not exceed the logarithm of the domain size: $0\leq H(X) \leq \log |D|$. In particular, $H(X)=0$ if and only if $X$ is constant (i.e., $p(a)=1$ for some $a \in D$), and $H(X)=\log |D|$ if and only if $X$ is uniformly distributed (i.e., $p(a) = 1/|D|$ for all $a \in D$). 

%Given a set of random variables
%where the logarithm is base two. 
%Fix a joint distribution over a set of random variables $\{X_1, \dots ,X_n\}$, and for each $\alpha\subseteq [n]$ denote by $X_\alpha$ the vector-valued random variable $(X_i)_{i\in \alpha}$. The
 %
 
 In the following, we list some common classes of vectors $\vect{h}=(h_\alpha)_{%\emptyset \neq 
 \alpha \subseteq [n]} \in\mathbb{R}^{2^{n}}$ over $n\geq 1$.
 For any vector $\vect{h}$, we write $h(\vect{X}_\alpha)$, where $\vect{X}_\alpha = \{X_i \mid i \in \alpha\}$, to denote the element $h_\alpha$. 
 Thus $\vect{h}$ is also a \emph{set function} over $n$ variables.
 We assume $h(\emptyset)=0$ for all set functions $\vect{h}$.
 For a list of functions $\vect{h}_1, \dots ,\vect{h}_n$, the function $c_1\vect{h}_1+\dots +c_n\vect{h}$ is called a \emph{non-negative combination} (resp. \emph{positive combination}) of  $\vect{h}_1, \dots ,\vect{h}_n$ if $c_i>0$ (resp. $c_i\geq 0$) for all $i\in [n]$.

 \noindent
 \textbf{Entropic functions.}
 Consider now a relation $R$ over a set $\vect{X}=\{X_i\}_{i=1}^n$ of variables with finite domains,
 associated with a probability distribution  $p:R \to [0,1]$. Each subset of $\vect{Y}\subseteq \vect{X}$ can be viewed as 
 a random variable with domain $D=R[\vect{Y}]$ and probability distribution $p_{\vect{Y}}(t) = \sum_{{t'\in R,t'[\vect{Y}]=t}} p(t')$.
 The relation $R$ then defines an \emph{entropic function} %$h\colon 2^{[n]} \to \mathbb{R}$, where 
 $\vect{h}=(h_\alpha)_{%\emptyset \neq 
 \alpha \subseteq [n]}$,% 
 where $h_\alpha\coloneqq H(\vect{X}_\alpha)$. %, for $\vect{X}_\alpha=\{X_i\}_{i\in \alpha}$ perceived as a vector-valued random variable. 
  The \emph{entropic region} $\Gamma^*_n\subseteq \mathbb{R}^{2^{n}}$ consists of all entropic functions over $n$. The \emph{almost entropic region} $\overline{\Gamma^*_n}$ is defined as the topological closure of $\Gamma^*_n$.
 %Writing $\vect{U}\vect{V}$ for the union of two sets $ \vect{U}$ and $\vect{V}$, % for $ 
 %If we write $H(\vect{V}\mid \vect{U}=\vect{u})$ for the entropy of $\vect{V}$ conditioned on $\vect{U}=\vect{u}$, it is possible to  
%

 \noindent
 \textbf{Polymatroids.} If $\vect{h}$ satisfies the \emph{polymatroidal axioms} (Fig. \ref{fig:polym}), it is called a \emph{polymatroid}.
 The set of polymatroids over $n$  is denoted $\polym{n}$.

  \begin{figure}[ht]\label{fig:axioms}
  \centering
  \begin{tikzpicture}[every node/.style={outer sep=0pt}]
  \def\m{1.4em}
    \node[draw, rounded corners, minimum width=1\textwidth] (box1) {
      \begin{minipage}[t][1.8cm]{0.9\textwidth}
      \hspace{0cm}
      %\textbf{polymatroidal axioms:}
        \begin{enumerate}%[align=left]%[leftmargin=\m]
  \item $h(\emptyset) = 0$  
  \item $h(\vect{X}\cup \vect{Y}) \geq h(\vect{X})$ (monotonicity)
  \item $h(\vect{X}) + h(\vect{Y}) \geq  h(\vect{X}\cap \vect{Y}) + h(\vect{X}\cup \vect{Y})$ (submodularity)
        \end{enumerate}
      \end{minipage}
    };
  \end{tikzpicture}
  \caption{polymatroidal axioms. \label{fig:polym}}
\end{figure}

 \noindent
 \textbf{Monotone functions.} If $\vect{h}$ satisfies the first two axioms %(monotonocity and non-negativity) 
 of the {polymatroidal axioms}, we call it a \emph{monotone function}, and denote the set of monotone functions over $n$  by $\monotone{n}$.

 \noindent
 \textbf{Normal polymatroids.}
 For $\vect{U}\subseteq \vect{X}$, the function
\begin{equation}\label{eq:step}
s_{\vect{U}}(\vect{W})=\begin{cases}
0 &\text{ if }\vect{W}\subseteq \vect{U};\\
1&\text{ otherwise;}
\end{cases}
\end{equation}
is called a \emph{step function}. We also use the notation $s^{\vect{V}}$ to denote the set function $s_{\vect{X}\setminus \vect{V}}$. The step function $s^{\vect{V}}$ is the uniform distribution of two tuples $t$ and $t'$ such that $t(A)\neq t'(A)$ if and only if $A \in \vect{V}$. The set of all step functions over $n$  is denoted $\step{n}$.  A  \emph{normal polymatroid} is a positive combination of step functions, and the set of all normal polymatroids over $n$ is denoted $\normal{n}$. 

 \noindent
 \textbf{Modular polymatroids.} A polymatroid $\vect{h}$ is called \emph{modular} if the submodularity inequality (Fig. \ref{fig:polym}) is an equality: $h(\vect{X}) + h(\vect{Y}) =  h(\vect{X}\cap \vect{Y}) + h(\vect{X}\cup \vect{Y})$. Alternatively, modular polymatroids can be defined in terms of basic modular functions.
 A \emph{basic modular function} is a step function of the form $s^{\{A\}}$, that is, it is a step function defined in terms of a singleton set. A function $\vect{f}$ is a modular polymatroid if and only if it is a positive combination of basic modular functions.
 
 %For $A \in \vect{X}$, the function
%\begin{equation}\label{eq:basic}
%b_A(B)=\begin{cases}
%1 &\text{ if }A=B;\\
%0&\text{ otherwise;}
%\end{cases}
%\end{equation}
%is called a \emph{basic modular function}. A function is a polymatroid 
%The set of all modular functions over $n$ is denoted $\modular{n}$. Alternatively, $\vect{h}$ is {modular} if the submodularity inequality (Fig. \ref{fig:polym}) is an equality: $h(\vect{XY})+h(\vect{XZ})=h(\vect{X})+h(\vect{XYZ})$.

By continuity of Eq. \eqref{eq:entropy}, and since there are no restrictions on domain sizes, $c\vect{h}$ is entropic for $c > 0$ and step functions $\vect{h}$. 
Furthermore, if $\vect{h}$ and $\vect{h}'$ are entropic functions defined by probability distributions $p$ and $p'$ over some relation $R$, the distribution $p''(t \otimes t') \coloneqq p(t)p'(t')$
on the direct product $\{t \otimes t' \mid t,t'\in R\}$,  $(t \otimes t')(X)\coloneqq (t(X),t'(X))$, defines $\vect{h}+\vect{h}'$. This shows that the entropic region is closed under multiplication by positive integers, even though in general it is not closed under positive scalar multiplication \cite{yeung08}; in other words, $\ent{n}$ is not a \emph{cone}. %$\ent{n}$ is not a
 %\emph{cone}.
We conclude that both modular and normal polymatroids are entropic. 
In fact, Kenig and Suciu \cite{KenigS22} have shown that the normal polymatroids are exactly those entropic functions that have a non-negative I-measure \cite{yeung08}.
The introduced set functions form an increasing sequence
\[
\modular{n}\subseteq \normal{n}\subseteq \ent{n}\subseteq \aent{n} \subseteq \polym{n} \subseteq \monotone{n}.
\]
 If $n \geq 4$, then all the subset relations in this sequence are strict.
  %Let $V \subseteq R^m$. We say that $V$ is a \emph{cone} if it is closed under non-negative scalar multiplication (i.e. $a\vect{x} \in V$ for all $a\geq 0$ and $\vect{x} \in V$), and it is \emph{convex} if it closed under convex combination (i.e., $a\vect{x} + b\vect{y}\in V$ for all $a,b>0$ and $\vect{x},\vect{y}\in V$). Furthermore, $V$ is \emph{polyhedral} if there exists a matrix $A\in \mathbb R^{k\times m}$ such that $V=\{\vect{x} \in \mathbb R^m \mid A\vect{x}\geq 0\}$.
 %If $n\geq 4$, it is known that $\Gamma^*_n\subsetneq \overline{\Gamma^*_n}\subsetneq \Gamma_n$, where $\overline{\Gamma^*_n}$ is a non-polyhedral convex cone, %(i.e., closed under convex combination and non-negative scalar multiplication), 
 %while  $\Gamma^*_n$ is neither convex nor a cone. Note that $\Gamma_n$ is a polyhedral cone by definition.

We will repeatedly refer to the following Shannon's information measures (over some $\vect{h}$). % consisting of conditional entropy, mutual information, and conditional mutual information. % All of these measures are non-negative, and definable in terms of entropy:
\begin{itemize}
\item Conditional entropy: $h(\vect{Y}\mid \vect{X}) \coloneqq h(\vect{XY})-h(\vect{X})$.
\item Mutual information: $I_{\vect{h}}(\vect{X};\vect{Y}) \coloneqq h(\vect{X})+h(\vect{Y})-h(\vect{X}\vect{Y})$.
\item Conditional mutual information: $I_{\vect{h}}(\vect{Y};\vect{Z}\mid \vect{X}) \coloneqq h(\vect{XY})+h(\vect{XZ})-h(\vect{X})-h(\vect{X}\vect{YZ})$.
\end{itemize}
We may drop the subscript $\vect{h}$ if it is clear from the context.

An \emph{information inequality} is an expression $\phi$ of the form 
\begin{equation}\label{eq:ineq}
c_1h(\vect{X}_1) + \dots +c_kh(\vect{X}_k)\geq 0,
\end{equation}
where $c_i\in \mathbb{R}$, and $\vect{X}_i$ are sets of variables from % such that $\bigcup_{i=1}^k\vect{X}_i =
$\{X_j\}_{j=1}^n$. We sometimes write $\phi(\vect{X})$ instead of $\phi$ to emphasize that the set of variables appearing in $\phi$ is $\vect{X}$.
%. Suppose $n \geq 1$ is such that
%$\vect{X}_i\subseteq \{X_j\}_{j=1}^n$, and 
For $V\subseteq \mathbb{R}^{2^n}$, we say that $\phi$ is \emph{valid} over $V$, denoted
$V \models \phi$, if it holds true for all functions $\vect{h} \in V$. 

This paper focuses on the \emph{information inequality problem} ($\iip$), introduced in \cite{KhamisK0S20}, which is to decide whether a given information inequality is valid over $\ent{n}$. 
%The decidability of this problem is a major open question. It is only known that 
This problem is co-r.e. \cite{KhamisK0S20}, as the continuity of the entropy \eqref{eq:entropy} and the density of the rationals in the reals imply that enumeration of all rational distributions will eventually lead to a counterexample of \eqref{eq:ineq}, if one exists at all. 
%In many applications, one is only interested in information inequalities of certain syntactic form, in which case it may suffice to focus on some subset of $\mathbb{R}^{2^n}$ that is computationally more accessible than $\ent{n}$, while having the same validity problem. For this reason, 
 We introduce the following relativized version of $\iip$. Fixing sets of functions $S_n \subseteq \mathbb{R}^{2^n}$, $n\geq 1$, and a set $\mathcal{C}$ of information inequalities, the \emph{information inequality problem over $S_n$ w.r.t. $\mathcal{C}$} ($\iipb{S_n}{\mathcal{C}}$) is to determine whether a given information inequality $\phi \in \mathcal C$ over $n$ variables is valid over $S_n$. We leave out $S_n$ (resp. $\mathcal{C}$) %from the notation 
  if
 $S_n=\Gamma^*_n$ (resp. $\mathcal{C}$ contains all information inequalities). Note that an inequality $\phi$ is valid over the entropic region $\ent{n}$ if and only if it is valid over the almost entropic region $\aent{n}$. To see why, $V \models \phi$ is tantamount to $V \subseteq C_\phi$, where $C_\phi = \{\vect{h} \in \mathbb{R}^{2^n}\mid \vect{h}\models \phi\}$, and by taking closures on both sides, $\ent{n}\subseteq C_\phi$ entails $\aent{n} \subseteq C_\phi$.
 More generally, validity over $\ent{n}$ and $\aent{n}$ disagrees with respect to Boolean combinations of information inequalities \cite{KacedR13,KhamisK0S20}.
 %Since  distinction between $\ent{n}$ and $\aent{n}$ is roughly analogous to the distinction between finite and possibly infinite databases
 Since our focus is on the information inequality problem alone, we now drop the almost entropic region $\aent{n}$ from discussions.

Before proceeding, we  shortly discuss input representation. We assume that the coefficients are rational. Note that in \cite{KhamisK0S20} the inputs of $\iip$ and other related problems are vectors $\vect{c} \in \mathbb{Z}^{2^n}$ representing the coefficients in Eq. \eqref{eq:ineq}. In this paper, we consider the input as a sequence $((c_1, \vect{X}_1), \dots ,(c_k,\vect{X}_k))$, which is potentially exponentially shorter than the aforementioned coefficient vector $\vect{c}$. This distinction is not important if one is solely interested in decidability, as is the case in \cite{KhamisK0S20}. Since our aim is to chart the tractability boundary for different information inequality problems, we opt for the latter more concise representation. Furthermore, we assume that the coefficients themselves are encoded in binary. 
%Some of our $\coNP$-completeness results are strong will hold even under the unary encoding of coefficients
%The computational complexity of the problems may also be sensitive to the encoding of the coefficients themselves (unary or binary). This issue will be considered separately for each of the results presented.

We begin our analysis from intractable examples, and then move on to discuss tractable cases and their connections to query output bounds.

\section{Intractable cases}
Kenig and Suciu \cite{KenigS22} establish an interesting connection between information inequalities and the implication problem for database dependencies. % through a concept called {relaxation}. 
Fix a relation schema $\vect{X}$ of $n$ variables.
  An expression of the form $\sigma=(\vect{V};\vect{W}\mid \vect{U})$ is called a \emph{conditional independence} (CI). 
 If $\vect{UVW}=\vect{X}$, $\sigma$ is specifically called a \emph{saturated conditional independence} (SCI), and 
 if $\vect{V}=\vect{W}$, it is called a \emph{conditional} and shortened as $(\vect{V}\mid \vect{U})$.   
 %A set function $\vect{h}$ \emph{satisfies} $\sigma$, denoted $h \models \sigma$, if the corresponding information measure vanishes, i.e., $I_h(\sigma)=0$.
 %The satisfaction relation is extended to sets naturally: $h\models \Sigma$ if $h\models \sigma$ for all $\sigma \in \Sigma$.
 %
% Fix $S\subseteq \mathbb{R}^{2^n}$.
The results in \cite{KenigS22} entail that if $\Sigma$ is a set of SCIs and conditionals, and $\tau$ is a conditional, then for any $V$ such that  $\step{n}\subseteq V\subseteq \polym{n}$,
 %For a set $\Sigma\cup\{\tau\}$ of CIs, consider an information inequality of the form
 \begin{equation}\label{eq:imp}
 V \models \sum_{\sigma\in \Sigma} h(\sigma)\geq h(\tau) \iff \Sigma \models \tau,
 \end{equation}
 %$\Sigma$ \emph{implies} $\tau$ over $S$, denoted $S \models \Sigma \Rightarrow \tau$, if for all $h\in V$, $h\models \Sigma$ implies $h\models \tau$. 
  %$K\models$ Eq. \eqref{eq:imp}
 %if and only if 
 %$K_0\models \Sigma $ implies $K_0\models \tau$ 
 %if and only if
  %$K_1\models \Sigma $ implies $K_1\models \tau$ 
  %if and only if $\Sigma$ implies $\tau$ 
  where the right-hand side denotes implication between corresponding MVDs and FDs over database relations. Whether or not $\sum_{\sigma\in \Sigma} h(\sigma)\geq h(\tau)$ is valid over $V$ can be thus decided in polynomial time, because the implication problem for MVDs and FDs is known to be in polynomial time \cite{BeeriFH77}.
 
There are at least two ways to make the inequality in Eq. \eqref{eq:imp} harder. One possibility is to allow more complex information measures,  much like how one can allow more expressive database dependencies in the implication problem. For instance, once the aforementioned syntactic restrictions are lifted, the implication problem for CIs becomes undecidable in both database theory (where CIs are known as embedded multivalued dependencies) and probability theory \cite{herrmann95,Li23}. Another possibility,  which does not seem to have a counterpart in the implication problem, is to permit coefficients distinct from $1$.
Next, we consider both of these strategies in isolation, considering  first complex information measures.

%Let $S$ be a set of random variables. 
 The mutual information of two random variables generalizes to the {multivariate mutual information} over a set of random variables $\vect{S}$. For a general set function $\vect{h}$, the \emph{multivariate mutual information} is given as
\begin{equation}\label{eq:multi-mutual}
I_h(\vect{S}) = \sum_{\vect{T} \subseteq \vect{S}} (-1)^{|\vect{T}|-1} h(\vect{T}).
\end{equation}
Again, we drop the subscript whenever this is possible without confusion. A particular case of the multivariate mutual information is the three-variate one: 
\begin{equation}\label{eq:three}
I(ABC) = h(A) + h(B) + h(C) - h(AB) - h(AC) - h(BC) + h(ABC).
\end{equation}
%We again drop the subscript $h$ if it is clear from the context. 
 Multivariate mutual information is non-negative on step functions, but it can be negative on entropic functions.
 For example, if $A$ and $B$ are independent and uniformly either $0$ or $1$, and $C=A+B \pmod 2$, then Eq. \eqref{eq:three} evaluates to $-1$ on the corresponding entropic function. Next we show that solving inequalities containing three-variate mutual informations and conditional entropies can already be $\coNP$-hard, even if each coefficient is exactly one. The result, proven by reduction from monotone satisfiability, holds for step functions but does not extend to entropic functions.
 
  %In one sense, these inequalities are syntactically similar to the ones representing the implication of FDs by SCIs and FDs: each coefficients is exactly one. On the other hand, these inequalities do not seem to represent any database implication problem.

A conjunctive normal form Boolean formula $\phi$ is called \emph{monotone} if each clause in $\phi$ contains only negative or only positive literals.
The \emph{monotone satisfiability problem} is the problem of deciding whether such a formula $\phi$ has a satisfying truth assignment. This problem is 
 well known to be $\NP$-complete \cite{Gold78}, and it remains $\NP$-complete even if each clause consists of exactly three distinct literals
\cite{Li97a}. Let us denote this restriction of the problem by \textsf{3DMONSAT}.
An instance of \textsf{3DMONSAT} can be represented as a pair $\phi=(\phi^+, \phi^-)$, where $\phi^+$ (resp. $\phi^-$) is the set of all positive (resp. negative) clauses of $\phi$, and each clause is a set of exactly $3$ variables.

\begin{theorem}\label{thm:strogn}
%Let $\step{n} \subseteq K \subseteq \normal{n}$.
The information inequality problem over normal polymatroids is $\coNP$-complete.
\end{theorem}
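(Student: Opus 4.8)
The plan is to establish both membership in $\coNP$ and $\coNP$-hardness for $\iipb{\normal{n}}{}$, where the input inequality is given in the concise representation $((c_1,\vect{X}_1),\dots,(c_k,\vect{X}_k))$.

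\textbf{Membership in $\coNP$.} First I would exploit the fact that a normal polymatroid is, by definition, a positive combination of step functions, and that $\step{n}$ is finite with one step function $s^{\vect{V}}$ per subset $\vect{V}\subseteq \vect{X}$. The cone $\normal{n}$ is thus finitely generated by the $2^n$ step functions. Since validity of a linear inequality over a cone is equivalent to validity on its extreme rays, I would argue $\normal{n}\models\phi$ if and only if $\phi$ holds on every step function $s^{\vect{V}}$. To certify \emph{non}-validity in nondeterministic polynomial time, the verifier guesses a single witnessing set $\vect{V}\subseteq\vect{X}$, represented in $O(n)$ space, and checks that evaluating $\phi$ on $s^{\vect{V}}$ yields a strictly negative value. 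Evaluating $\sum_i c_i\, s^{\vect{V}}(\vect{X}_i)$ takes polynomial time: each term is $c_i$ if $\vect{X}_i\not\subseteq \vect{X}\setminus\vect{V}$, i.e.\ if $\vect{X}_i\cap\vect{V}\neq\emptyset$, and $0$ otherwise. Hence the complement of the problem is in $\NP$, so the problem is in $\coNP$.

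\textbf{$\coNP$-hardness.} The reduction is from \textsf{3DMONSAT}, whose complement is $\coNP$-complete, following the hint that the result is ``proven by reduction from monotone satisfiability.'' Given an instance $\phi=(\phi^+,\phi^-)$ over variables $\{x_1,\dots,x_n\}$, I would introduce one polymatroid variable $X_i$ per Boolean variable $x_i$ and design an inequality $\Phi$ over step functions so that $\Phi$ is \emph{invalid} over $\normal{n}$ precisely when $\phi$ is satisfiable. The key is the observation already recorded in the excerpt: the three-variate mutual information $I(ABC)$ of Eq.~\eqref{eq:three} is non-negative on step functions, and a conditional entropy $h(\vect{Y}\mid\vect{X})$ encodes a monotonicity/containment relationship. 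A truth assignment to $\{x_i\}$ corresponds to a set $\vect{V}=\{X_i : x_i = \text{true}\}$, i.e.\ to the step function $s^{\vect{V}}$; then $I(X_aX_bX_c)$ evaluated on $s^{\vect{V}}$ is $1$ exactly when at most one of $X_a,X_b,X_c$ lies in $\vect{V}$, and the gadget for a positive clause must ``fire'' precisely when the clause is satisfied. For each positive clause $\{x_a,x_b,x_c\}\in\phi^+$ I would use a term whose value on $s^{\vect{V}}$ detects whether the clause is true, and symmetrically for negative clauses, assembling these into a single inequality $\sum(\text{clause terms})\ge 0$ so that a simultaneous satisfaction of all clauses drives the left-hand side below zero. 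By construction every coefficient can be forced to equal $1$, matching the strengthening claimed in the surrounding text.

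\textbf{Main obstacle.} The delicate part is the engineering of the gadgets: I must pick three-variate mutual-information and conditional-entropy terms whose values on the $2^n$ step functions exactly mirror clause satisfaction, while keeping all coefficients equal to $1$ and ensuring the overall inequality is non-valid if and only if \emph{all} clauses are simultaneously satisfiable, rather than merely some clause. Getting the arithmetic to line up—so that a satisfying assignment produces a negative value and every non-satisfying assignment produces a non-negative value—requires a careful accounting of how many of the $\binom{n}{\le 3}$ terms contribute, and is where the bulk of the work lies. I would also verify the final clause, namely that the hardness holds over step functions but the constructed inequality is in fact \emph{valid} over $\ent{n}$, by exhibiting the entropic counterexample (the parity construction $C=A+B \bmod 2$ from the text) showing that $I(ABC)$ can go negative on entropic functions, thereby demonstrating the promised disagreement between $\normal{n}$ and $\ent{n}$.
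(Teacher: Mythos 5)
Your $\coNP$ membership argument is correct and is essentially the paper's: validity over $\normal{n}$ reduces to validity over the $2^n$ step functions, and a violating subset $\vect{V}$ is a polynomial-size, polynomial-time-checkable certificate. The hardness half, however, is a plan rather than a proof, and the one concrete computation you commit to is wrong. On the step function $s^{\vect{V}}$ the three-variate mutual information $I(X_aX_bX_c)$ equals $1$ exactly when \emph{all three} of $X_a,X_b,X_c$ lie in $\vect{V}$, and equals $0$ in every other case (check the four cases $|\{X_a,X_b,X_c\}\cap\vect{V}|=0,1,2,3$; the inclusion--exclusion cancels except in the last one). It is not $1$ when ``at most one'' of them lies in $\vect{V}$. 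This behaviour is precisely why $I(\vect{C})$ is the right gadget for a \emph{negative} clause: it fires (value $1$) exactly when the clause is falsified, i.e.\ when all of its variables are set to true.

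The second problem is the shape of the target inequality. You propose to assemble the clause terms into $\sum(\text{clause terms})\geq 0$ and have a satisfying assignment ``drive the left-hand side below zero.'' That cannot happen: conditional entropies and three-variate mutual informations are all non-negative on step functions, so with unit coefficients the left-hand side never goes negative. The paper instead puts $h(\vect{X})$ on the right-hand side,
\[
\sum_{\vect{C}\in \phi^+} h(\vect{X} \mid \vect{C}) + \sum_{\vect{C}\in \phi^-} I(\vect{C}) \geq h(\vect{X}),
\]
using $h(\vect{X}\mid\vect{C})$ for positive clauses. A satisfying assignment (WLOG nonempty) makes every left-hand term $0$ while $h(\vect{X})=1$, falsifying the inequality; a non-satisfying nonempty assignment leaves at least one left-hand term equal to $1$, and the empty assignment makes the right-hand side $0$. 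Without this asymmetric right-hand term your sketch has no mechanism for the inequality to fail. Finally, your closing remark misreads the paper: the claim is only that the \emph{proof technique} does not extend to entropic functions (because $I$ can be negative there), not that the reduction's output is valid over $\ent{n}$.
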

\begin{proof}
Since normal polymatroids are positive combinations of step functions, and inequalities are preserved under positive combinations, the information inequality problems over step functions and normal polymatroids coincide.
%Since 
The upper bound is thus obvious. %, it suffices to
For the lower bound,
we present a reduction from the complement of \textsf{3DMONSAT} to the information inequality problem over step functions. Let $\phi=(\phi^+, \phi^-)$ be an instance of \textsf{3DMONSAT}. Suppose $\vect{X}$ is the set of variables appearing in $\phi$.
We may assume without loss of generality that every satisfying assignment must map at least one variable to $1$.

Define an information inequality
\begin{equation}\label{eq:reduction}
\sum_{\vect{C}\in \phi^+} h(\vect{X} \mid \vect{C}) + \sum_{\vect{C}\in \phi^-} I(\vect{C}) \geq h(\vect{X}),
\end{equation}
where $I(\vect{C})$ is the three-variate mutual information \eqref{eq:three} over the variables of $\vect{C}$, and $h(\vect{X} \mid \vect{C})$ is the conditional entropy of $\vect{X}$ given $\vect{C}$.

 Each subset $\vect{Y}\subseteq \vect{X}$ determines a unique step function $s^{\vect{Y}}$ (Eq. \ref{eq:step}) and a unique Boolean assignment 
 \[
 m_{\vect{Y}}(A)=\begin{cases}
 1 & \text{ if }A \in \vect{Y},\\
 0 & \text{ otherwise.}
 \end{cases}
 \]
 We claim that $m_{\vect{Y}}$ satisfies $\phi$ if and only if Eq. \eqref{eq:reduction} is false for $h=s^{\vect{Y}}$.
 
 Assume first that $m_{\vect{Y}}$ satisfies $\phi$. By our assumption some variable is mapped to $1$, which means that $\vect{Y}$ is non-empty. In particular, $s^{\vect{Y}}(\vect{X})=1$. For any positive clause $\vect{C}\in \phi^+$, we have $\vect{C}\cap \vect{Y}\neq \emptyset$, and consequently $s^{\vect{Y}}(\vect{X} \mid \vect{C}) = s^{\vect{Y}}(\vect{X})- s^{\vect{Y}}(\vect{C})=0$.
For any negative clause $\vect{C}\in \phi^-$, we have $\vect{C}\not\subseteq \vect{Y}$, in which case it is straightforward to verify that $I(\vect{C})=0$. We conclude that Eq. \eqref{eq:reduction} is false for $h=s^{\vect{Y}}$.

Assume then that $m_{\vect{Y}}$ does not satisfy $\phi$. If $s^{\vect{Y}}(\vect{X})=0$, then Eq. \eqref{eq:reduction} is trivially true for $h=s^{\vect{Y}}$ by the non-negativity of the conditional entropy and the multivariate mutual information on step functions.
Suppose then  $s^{\vect{Y}}(\vect{X})\neq 0$, in which case $s^{\vect{Y}}(\vect{X})= 1$.
Assuming $m_{\vect{Y}}$ does not satisfy some $\vect{C}\in \phi^+$, we have $\vect{C}\cap \vect{Y}=\emptyset$ implying $s^{\vect{Y}}(\vect{X} \mid \vect{C})=0$. Assuming $m_{\vect{Y}}$ does not satisfy some $\vect{C}\in \phi^-$, we have $\vect{C}\subseteq \vect{Y}$ implying $I(\vect{C})=1$. We conclude that, for $h=s^{\vect{Y}}$, the left-hand side of Eq. \eqref{eq:reduction} is at least $1$, and thus the inequality holds. This concludes the proof of the claim.

The claim implies that $\phi$ is not satisfiable if and only if Eq. \eqref{eq:reduction} is valid over step functions. The theorem statement follows, since the reduction is clearly polynomial.
%Note that the size of $L_\phi$ is polynomial in the size of $\phi$. In particular, rewriting the multivariate mutual information as in \eqref{eq:multi-mutual} remains polynomial in size, as we restrict to clauses with at most three variables.
\end{proof}
Observe that Eq. \eqref{eq:reduction} is syntactically similar to the inequality in Eq. \eqref{eq:imp} in that each coefficient is exactly one. The difference comes from allowing three-variate mutual information, whereas the inequality in Eq. \eqref{eq:imp} allows only specific forms of conditional mutual information. The above proof moreover establishes \emph{strong} $\coNP$-completeness, because the problem remains $\coNP$-complete even under unary encoding of the coefficients.

Alternatively, the preceding theorem can be proven by reducing $3$-colorability to inequalities that allow the coefficients to grow while using only conditionals. Let $G=(\vect{V},\vect{E})$ be a graph consisting of a vertex set $\vect{V}$ and a set of undirected edges $\vect{E}$.
For each node $A \in \vect{V}$, introduce variables $A_r,A_g,A_b$ representing possible colors of $A$. Assume that the graph contains $n$ vertices. We define
\begin{equation}\label{eq:color}
\sum_{\substack{c\in \{r,g,b\}\\A \in \vect{V}}} h(A_c) + 
\sum_{\substack{c,d \in \{r,g,b\}\\c\neq d\\A\in \vect{V}}} (2n+1)h(\vect{V}\mid A_cA_d)+\sum_{\substack{c \in \{r,g,b\}\\\{A,B\}\in \vect{E}}} (2n+1)h(\vect{V}\mid A_cB_c)
\geq (2n+1) h(\vect{V}).
\end{equation}
Then $G$ is three-colorable if and only if Eq. \eqref{eq:color} is not valid over step functions (Appendix \ref{sect:alt}). This way of proving Theorem \ref{thm:strogn} establishes also strong $\coNP$-completeness, since each coefficient is bounded by a polynomial in the input size.

%The next example presents an alternate proof for the preceding theorem. In this example, we allow coefficients to grow but use only conditionals. This proof shows also $\coNP$-completeness in the strong sense.
It is necessary to allow coefficients other than $1$ in Eq. \eqref{eq:color}.  Otherwise, the equivalence \eqref{eq:imp} holds, meaning that 
the validity problem is equivalent to the implication problem for FDs, which is in polynomial time.
We may also notice that Eq. \eqref{eq:color} behaves differently for step functions and entropic functions, even though both  functions are non-negative on all the occurring information measures; in contrast, the proof of Theorem \ref{thm:strogn} relied on three-variate mutual information which is only guaranteed to be non-negative for step functions but can be negative for entropic functions.
 Suppose $G$ is the complete graph of four vertices.
Since $G$ is not three-colorable, Eq. \eqref{eq:color} is valid over  step functions, and remains valid even if one extends $G$ by an arbitrary number of isolated  vertices. However, once this extension is sufficiently large, Eq. \eqref{eq:color} can be shown to be  not valid for entropic functions by selecting the uniform distribution over the Boolean support.
%Hence, even though both step functions and entropic functions are non-negative on the information measures appearing in Eq. \eqref{eq:color}, the corresponding validity problems still do not coincide.

%The two proofs in particular display two different sources of complexity for information inequalities.
%Eq. \ref{eq:reduction} allows expressive information measures (three-variate mutual information), while stipulating that each coefficient has to be $1$. 
%Eq. \ref{eq:color} allows coefficients to grow, while limiting to simple information measures (entropy and conditional entropy).
%An inequality that adheres to both restrictions simultaneously would in turn correspond to an implication problem for FDs, which is solvable in polynomial time

Having shown two ways to construct strongly $\coNP$-complete information inequality problems, the next theorem presents a specific syntactic form of Eq. \eqref{eq:ineq} that admits a reduction from a weakly $\coNP$-complete problem. The proof is located in Appendix \ref{sect:weak}. %(in terms of validity over step functions).
\begin{restatable}{theorem}{weak}\label{thm:weak}
%Let $\step{n} \subseteq K \subseteq \normal{n}$.
The information inequality problem over normal polymatroids w.r.t. inequalities of the form 
\begin{equation}\label{eq:ineq2}
c_1h(\vect{X}_1) + \dots +c_kh(\vect{X}_k)\geq d_1h(\vect{Y}_1) + \dots +d_lh(\vect{Y}_l) \quad (c_i,d_i>0),
\end{equation}
where $|\vect{Y}_i|\leq 2$, is $\coNP$-complete.
\end{restatable}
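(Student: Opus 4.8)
The plan is to establish membership exactly as in Theorem~\ref{thm:strogn} and to obtain hardness by a numeric (hence weak) reduction from \textsc{Subset-Sum}.

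For the upper bound, I would observe that validity over $\normal{n}$ again coincides with validity over step functions (normal polymatroids are positive combinations of step functions, and inequalities are preserved under such combinations), so a counterexample to \eqref{eq:ineq2} is a single step function $s^{\vect V}$, i.e.\ a subset $\vect V\subseteq\vect X$. Guessing $\vect V$ and evaluating every term $h(\vect Z)$ as the indicator $[\vect Z\cap\vect V\neq\emptyset]$ is polynomial, so non-validity is in $\NP$ and the problem is in $\coNP$; this already holds without the syntactic restriction.

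For the lower bound, given a \textsc{Subset-Sum} instance $a_1,\dots,a_n$ with target $t>0$, I would introduce one variable $A_i$ per element and read a step function $s^{\vect V}$ as the selection $x_i=[A_i\in\vect V]$. Then $h(A_i)=x_i$, $h(A_iA_j)=x_i\vee x_j$, and crucially the pair term yields a product, since $I(A_i;A_j)=h(A_i)+h(A_j)-h(A_iA_j)$ equals $x_ix_j$ on step functions. Writing $g=\sum_i a_ih(A_i)$, the goal is to force the difference $\mathrm{LHS}-\mathrm{RHS}$ to evaluate to $g^2-2tg+(t^2-\tfrac12)h(\vect X)$, which is $(\sum_i a_ix_i-t)^2-\tfrac12$ on every nonempty selection and $0$ on the empty one. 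Expanding $g^2$ via $x_i^2=x_i$ places $\sum_i a_i^2h(A_i)+2\sum_{i<j}a_ia_j\bigl(h(A_i)+h(A_j)\bigr)$ on the left and $2\sum_{i<j}a_ia_jh(A_iA_j)$ on the right, the $-2tg$ term contributes $2t\sum_i a_ih(A_i)$ on the right, and the gated constant $(t^2-\tfrac12)h(\vect X)$ sits on the left. All resulting coefficients are positive and every right-hand-side set has size at most $2$, so after doubling to clear the $\tfrac12$ the inequality has the required shape. A violating step function then exists iff some subset sums exactly to $t$, giving $\coNP$-hardness; since the coefficients $a_i^2,a_ia_j,t^2$ are polynomially many but of magnitude up to $(\max_i a_i)^2$, this is a weak reduction.

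The hard part is detecting the \emph{exact} condition $\sum_i a_ix_i=t$ with a single one-sided inequality, whose violation region is otherwise only a half-space: the squared deviation $(g-t)^2$ turns the equality into a strict dip below zero, and because it is quadratic it is realizable from size-$2$ pair terms alone. A secondary subtlety is that no nonzero constant is realizable, as every $h$ vanishes at $\emptyset$ and hence $\mathrm{LHS}-\mathrm{RHS}$ is forced to be $0$ at the empty selection; I route the needed constant through the full disjunction $h(\vect X)$, which equals $1$ on every nonempty selection, and I verify that absorbing the linear by-products of the product gadgets leaves all coefficients positive.
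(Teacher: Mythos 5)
Your proposal is correct and follows essentially the same route as the paper, which reduces from the complement of the (weakly $\NP$-complete) partition problem using the pair gadget $h(A_i\mid A_j)+h(A_j\mid A_i)$ (equal on step functions to $1$ exactly when $A_i$ and $A_j$ lie on opposite sides of the cut) to realize the cross-product $\bigl(\sum_{A_i\in\vect{U}}x_i\bigr)\bigl(\sum_{A_i\notin\vect{U}}x_i\bigr)$ against the gated threshold $((m/2)^2-1)h(\vect{X})$; your Subset-Sum variant with $I(A_i;A_j)=x_ix_j$ and the dip $(g-t)^2-\tfrac12$ is the same idea with an AND-gadget in place of a XOR-gadget, and the same coincidence of step-function and normal-polymatroid validity for the upper bound. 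One small correction: your claim that absorbing the linear by-products leaves all coefficients positive can fail (the net coefficient of $h(A_i)$ is $a_i\bigl(a_i+2\sum_{j\neq i}a_j-2t\bigr)$, which is negative e.g.\ for $a=(1,1)$, $t=2$), but this is harmless because a singleton pushed to the right-hand side still satisfies $|\vect{Y}_i|\leq 2$.
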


\section{Tractable cases}
We have seen that intractable inequalities arise from (i) complex information measures even if coefficients are restricted to $1$, (ii) more simple information measures if coefficients are allowed to grow, and (iii) inequalities where the negative coefficients are associated with sets of size at most two. In this section we %we approach the tractability/intractability boundary ``from below'', 
consider restrictions that give rise to inequalities solvable in polynomial time. We are specifically interested in inequalities of the form 
$
%\begin{equation}\label{eq:generalsigmaeq}
\sum_{\sigma \in \Sigma} w_i h(\sigma) \geq h(\vect{X})
$
%\end{equation}
where $w_i\geq 0$, and $\Sigma$ is a set of conditionals. Such inequalities make appearance when information theory is applied to obtain tight upper bounds for query output sizes. 
Since inequalities of the form \eqref{eq:color} are intractable, imposing syntactic restrictions on $\Sigma$ becomes necessary.

We next introduce information-theoretic query upper bounds, after which we move on to discuss the complexity of related syntactic restrictions and semantic modifications of the information inequality problem. %We will also consider tractable cases of the information inequality problem obtained by restricting or relaxing the semantics, and their relationship to polynomial-time computable instances of the entropic bound.
%A reader familiar with query upper bounds 

\subsection{Query upper bounds}
 Fix a relation $R$ over a variable set $\vect{X}$. Given vectors $\vect{U} , \vect{V}$ of variables from $\vect{X} $, and values $\vect{u}\in \Dom(\vect{U})$, the $V$\emph{-degree} of $\vect{U} = \vect{u}$ in $R$, denoted $\deg_R(\vect{V} \mid \vect{U} = \vect{u})$, is the number of distinct values $\vect{V}$ that occur in $R$ together with the value $\vect{u}$ of $\vect{U}$. The $\max$-$\vect{V}$\emph{-degree} of $\vect{U}$, denoted $\deg_R(\vect{V} \mid \vect{U})$ is the maximum $V$-degree of $\vect{U} = \vect{u}$ over all $\vect{u}$. Expressions of the form $\deg_R(\sigma)\leq B$ (omitting the parentheses of $\sigma$), where $\sigma $ is a conditional and $B \geq 1$, are usually called \emph{degree constraints}.
 Note that $\deg_R(\vect{V} \mid \vect{U})=1$ if and only if $R$ satisfies the functional dependency $\vect{U} \to \vect{V}$.
%Define then a set of \emph{degree statistics} over $\vect{X}$ as 
A $\Sigma$\emph{-inequality} is an information inequality $\phi_{\Sigma}(\vect{X},\vect{w})$ of the form
\begin{equation}\label{eq:sigeq}
\sum_{\sigma \in \Sigma} w_{\sigma} h(\sigma) \geq h(\vect{X}),
\end{equation}
where $\vect{w}=(w_{\sigma})_{\sigma \in \Sigma}$ is a sequence of non-negative reals.

Fix a \emph{self-join-free full conjunctive query}, i.e., a quantifier-free first-order formula of the form
\[
Q(\vect{X}) = R_1(\vect{X}_1)\land \dots \land R_n(\vect{X}_n),
\]
where $R_i(\vect{X}_i)$ are \emph{relational atoms} over distinct \emph{relation names} $R_i$, and variable sequences $\vect{X}_i$  such that $\vect{X} = \bigcup_{i=1}^n\vect{X}_i$. Note that this incurs a slight abuse of notation, because we also use $R_i$ to denote relations. %We assume here that $Q$ lacks self-joins, meaning that the relation names $R_i$ are distinct.  %, and $G_Q = (\vect{X}, \{\vect{X}_i\}_
We say that a set of conditionals $\Sigma$ is \emph{guarded} by $Q$ if every $\sigma = (\vect{V} \mid \vect{U})$ from $\Sigma$ is associated with a relation name $R_i$, called the \emph{guard} of $\sigma$ and denoted $R_{\sigma}$, such that $\vect{U}\vect{V} \subseteq\vect{X}_i $. 
%If each $\sigma$ in $\Sigma$ is guarded by some relation name from $Q$, then $Q$ is \emph{quarded} by 
%For a set $\Sigma$ of conditionals, % $(\vect{V} \mid \vect{U})$, %where $\vect{U}, \vect{V} $ are vectors of variables from $\vect{X}$, 
A sequence of the form $\vect{B}=(B_{\sigma})_{\sigma \in \Sigma}$, $B_\sigma \geq 1$, form the \emph{degree values} associated with $\Sigma$. 
A database $D$ containing relations $R_\sigma$, $\sigma \in \Sigma$, \emph{satisfies} a conditionals-values pair  $(\Sigma, \vect{B})$, written $D\models (\Sigma, \vect{B})$, if $\deg_{R_\sigma}(\sigma)\leq B_\sigma$ for all $\sigma \in \Sigma$.

For a set $S \subseteq \mathbb{R}^{2^n}$, and a set of conditionals $\Sigma$ guarded by $Q(\vect{X})$ and associated with values $\vect{B}$, define the \emph{bound} of $Q$ w.r.t. $\Sigma,S,\vect{B}$ as
 \[
 \ubound{S}(Q,\Sigma,\vect{B}) \coloneqq \inf_{\substack{\vect{w}\geq 0\\\S\models \phi_{\Sigma}(\vect{X},\vect{w})}} \prod_{\sigma \in \Sigma} B^{w_\sigma}_{\sigma}.
\]
The bounds $\ubound{\modular{n}}, \ubound{\normal{n}}, \ubound{\ent{n}}, \ubound{\polym{n}}$ are often referred to as the \emph{modular bound}, the \emph{coverage bound}, the \emph{entropic bound}, and the \emph{polymatroid bound}.
% \emph{entropic upper bound}, the \emph{polymatroid upper bound}, and coverage upper bound  (resp. $\ubound{\polym{n}}$). 
 Writing $Q(D)$ for the output of $Q$ on a database $D=\{R_1(\vect{X}_1), \dots ,R^D_n(\vect{X}_n)\}$, one can prove that the entropic bound is valid: $|Q(D)|\leq \ubound{\ent{n}}(Q,\Sigma,\vect{B})$ whenever $D \models (\Sigma, \vect{B})$ (see, e.g., \cite{Suciu23} for a derivation of the bound). Since $\ent{n}\subseteq \polym{n}$, the entropic bound is less than or equal to the polymatroid bound. The entropic bound is asymptotically tight, but it is open whether or not the bound is computable. The polymatroid bound can be attained by solving a linear program of exponential size, but it is not tight. 
 
Fortunately, there are well-behaving syntactic restrictions for sets of conditionals $\Sigma$, some of which are presented next. 
\begin{itemize}
\item  For $\sigma = (\vect{V} \mid \vect{U})$, where $\vect{U}=\emptyset$, degree constraints of the form $\deg_R(\sigma)\leq B$
are called \emph{cardinality constraints}. 
%A $\Sigma${-inequality} \eqref{eq:sigeq}, where $\Sigma$ contains only cardinality constraints, coincides for modular functions, entropic functions, and polymatroids. 
 The AGM bound \cite{AtseriasGM13} can be viewed as the entropic bound over  a specific set  of cardinality constraints.
\item  $\Sigma$ is called \emph{acyclic} if the following directed graph is acyclic: the vertices are the variables in $\vect{X}$, and there is an edge from $A$ to $B$ if $A \in \vect{X}$ and $B\in \vect{Y}\setminus \vect{X}$, for some $(\vect{Y}\mid \vect{X})\in \Sigma$. 
\item  $\Sigma$ is called \emph{simple} if $|\vect{U}|\leq 1$ for each  $(\vect{V} \mid \vect{U})\in \Sigma$. 
%For simple $\Sigma$, validity of $\Sigma${-inequalities} \eqref{eq:sigeq} coincides for step functions, entropic functions, and polymatroids, which entails that $\ubound{\step{n}}$, $\ubound{\ent{n}}$, and $\ubound{\polym{n}}$ agree. This bound is consequently also computable, but we do not know whether or not a polynomial-time algorithm exists.
\end{itemize}
The entropic bound is polynomial-time computable in all of these cases. Let us call the $\Sigma$-inequality \eqref{eq:sigeq} \emph{acyclic} (resp. \emph{simple}) if the underlying set $\Sigma$ is acyclic (resp. simple).  The sets of conditionals underlying cardinality constraints are vacuously acyclic, and validity for acyclic $\Sigma$-inequalities coincides for modular functions, entropic functions, and polymatroids \cite{KhamisKNS21}. Consequently, the entropic bound becomes computable in polynomial time through a linear program describing the validity of Eq. \eqref{eq:sigeq} over basic modular functions.
Validity for simple $\Sigma$-inequalities similarly coincides for  entropic functions,  polymatroids, and normal polymatroids. This does not immediately entail that the entropic bound for simple $\Sigma$ is computable in polynomial time, because normal polymatroids are constructed with step functions, and there are exponentially many step functions in the number of variables.
The entropic bound is nevertheless known to be polynomial-time computable in this case, as has been shown recently  \cite{ngoarxiv22}.

 Eq. \eqref{eq:color} can now be viewed as an $\Sigma${-inequality} \eqref{eq:sigeq} (up to scaling) arising from $\Sigma$ that does not belong to any of the aforementioned well-behaving classes. Since validity of inequalities of the form Eq. \eqref{eq:color} is $\coNP$-hard  over step functions (Appendix \ref{sect:alt}),  
 this immediately gives us the following result.
 \begin{theorem}\label{thm:deg}
 %Let $\step{n}\subseteq K \subseteq \normal{n}$. 
 The information inequality problem over normal polymatroids w.r.t. $\Sigma$-inequalities %, where $\Sigma$ is a set of conditionals $(\vect{V}\mid \vect{U})$ 
 is $\coNP$-complete. This problem remains $\coNP$-hard even if $|\vect{U}|\leq 2$ for all $(\vect{V}\mid\vect{U})\in \Sigma$.
\end{theorem}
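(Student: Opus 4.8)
The plan is to obtain both the upper and lower bound by assembling two facts already available in the excerpt: first, that validity over normal polymatroids coincides with validity over step functions (observed in the proof of Theorem~\ref{thm:strogn}, since normal polymatroids are positive combinations of step functions and inequalities are preserved under positive combinations); and second, that $3$-colorability reduces to the non-validity of Eq.~\eqref{eq:color} over step functions (Appendix~\ref{sect:alt}).

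For membership in $\coNP$, I would argue that a $\Sigma$-inequality $\phi_\Sigma(\vect{X},\vect{w})$ fails over $\normal{n}$ if and only if it fails over $\step{n}$, and that every step function is of the form $s^{\vect{Y}}$ for some $\vect{Y}\subseteq\vect{X}$. Such a $\vect{Y}$ is a certificate of size polynomial in the input, and given $\vect{Y}$ the left-hand side $\sum_{\sigma\in\Sigma} w_\sigma s^{\vect{Y}}(\sigma)$ is evaluated in polynomial time: each $\sigma=(\vect{V}\mid\vect{U})$ contributes $s^{\vect{Y}}(\vect{U}\vect{V})-s^{\vect{Y}}(\vect{U})\in\{0,1\}$, the weights are rational and encoded in binary, and there are $|\Sigma|$ terms to compare against $s^{\vect{Y}}(\vect{X})$. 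Hence non-validity is in $\NP$, so validity is in $\coNP$.

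For the lower bound, I would recognize Eq.~\eqref{eq:color}, after dividing through by $2n+1$, as a $\Sigma$-inequality of the form \eqref{eq:sigeq}: the terms $h(A_c)$ become conditionals $(A_c\mid\emptyset)$ with weight $\tfrac{1}{2n+1}$, the terms $h(\vect{V}\mid A_cA_d)$ and $h(\vect{V}\mid A_cB_c)$ become conditionals whose conditioning set has size exactly $2$ and whose weight is $1$, and the right-hand side is $h(\vect{V})$, the entropy of the full set of color variables, i.e. $h(\vect{X})$. All resulting weights are non-negative rationals. By Appendix~\ref{sect:alt}, $G$ is $3$-colorable if and only if this inequality is \emph{not} valid over step functions, so non-$3$-colorability reduces to validity over step functions, which equals validity over normal polymatroids. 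Since non-$3$-colorability is $\coNP$-complete, the $\Sigma$-inequality problem over normal polymatroids is $\coNP$-hard. As every conditional produced by this reduction satisfies $|\vect{U}|\leq 2$, the same reduction simultaneously establishes the sharper hardness claim.

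I do not expect a genuine obstacle here, since the statement is assembled from prior results; the only point demanding care is the bookkeeping in the lower bound, namely verifying that the scaled Eq.~\eqref{eq:color} falls literally within the $\Sigma$-inequality format with all weights non-negative and no conditioning set exceeding size two. No further information-theoretic or combinatorial work is needed beyond these checks.
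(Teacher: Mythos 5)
Your proposal is correct and follows essentially the same route as the paper: the upper bound comes from the coincidence of validity over normal polymatroids and step functions (a counterexample step function being a polynomial-size, polynomial-time-checkable certificate), and the lower bound comes from recognizing Eq.~\eqref{eq:color}, scaled by $\tfrac{1}{2n+1}$, as a $\Sigma$-inequality with all conditioning sets of size at most two and invoking the $3$-colorability reduction of Appendix~\ref{sect:alt}. The paper states this more tersely ("this immediately gives us the following result"), but your bookkeeping fills in exactly the checks it leaves implicit.
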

Related to the previous result, computing the coverage bound over a set of conditionals is known to be $\NP$-hard, and computing the polymatroid bound over an arbitrary set of conditionals can be efficiently reduced to computing the polymatroid bound over another set of conditionals $(\vect{V}\mid \vect{U})$ such that $|\vect{U}|\leq 2$ and $|\vect{V}|\leq 3$ \cite{ngoarxiv22}.

We now turn to discuss tractable cases of the information inequality problem obtained either by syntactic restrictions or semantic modifications. The syntactic restrictions we consider correspond quite closely to the aforementioned acyclic/simple $\Sigma$-inequalities.
\subsection{Modular functions}\label{sect:modular}
Since modular functions can be constructed as positive combinations of basic modular functions, % \eqref{eq:basic},
  the information inequality problem is trivially polynomial-time computable in this context.
\begin{proposition}
The information inequality problem over modular functions is in polynomial time.
\end{proposition}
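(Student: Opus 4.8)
The plan is to exploit the fact that a modular function is nothing more than an additive, nonnegatively weighted set function; this collapses the universally quantified validity question into finitely many sign checks, one per variable. First I would make the representation explicit. By the characterization stated above, every $\vect{h}\in\modular{n}$ is a positive combination $\vect{h}=\sum_{A} c_A\, s^{\{A\}}$ with $c_A\ge 0$. Since $s^{\{A\}}=s_{\vect{X}\setminus\{A\}}$ evaluates to $1$ exactly when $A\in\vect{W}$ and to $0$ otherwise, this means $h(\vect{W})=\sum_{A\in\vect{W}} c_A$ for every $\vect{W}$. Thus a modular function is completely determined by its nonnegative weight vector $(c_A)_A$, and conversely every such vector yields a modular function; in particular $h(\emptyset)=0$ holds automatically.

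Next I would substitute this form into a candidate inequality $\phi\colon \sum_{i=1}^k c_i h(\vect{X}_i)\ge 0$. Plugging in $h(\vect{X}_i)=\sum_{A\in\vect{X}_i} c_A$ and exchanging the order of summation gives
\[
\sum_{i=1}^k c_i h(\vect{X}_i)=\sum_A c_A \, d_A,\qquad d_A\defeq\sum_{i:\,A\in\vect{X}_i} c_i,
\]
where $d_A$ is simply the aggregate coefficient that variable $A$ receives across all terms of $\phi$ mentioning it. Now $\phi$ is valid over $\modular{n}$ iff $\sum_A c_A d_A\ge 0$ holds for every choice of nonnegative weights $(c_A)_A$. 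Because the weights range over $\Rpos$ coordinatewise independently, this universally quantified statement is equivalent to the conjunction of the sign conditions $d_A\ge 0$ over all variables $A$: if some $d_A<0$ we violate $\phi$ by setting that one weight large and the rest to $0$, while if all $d_A\ge 0$ then every admissible weighting makes the sum nonnegative. Variables not occurring in $\phi$ have $d_A=0$ and impose no constraint.

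This yields the algorithm together with its analysis. Initialize $d_A\defeq 0$ for each variable and, in a single pass over the input $((c_1,\vect{X}_1),\dots,(c_k,\vect{X}_k))$, add $c_i$ to $d_A$ for each $A\in\vect{X}_i$; then accept iff every $d_A\ge 0$. The total work is proportional to $\sum_i|\vect{X}_i|$ rational additions plus a comparison per variable, hence linear in the input length, and since all coefficients are rationals in binary each arithmetic operation is itself polynomial. There is no substantial obstacle here; the only points requiring care are the quantifier-elimination step, which turns ``valid for all nonnegative weight vectors'' into the finitely many per-variable sign checks and is justified by the coordinatewise independence of the $c_A$, and the bookkeeping remark that absent variables contribute $d_A=0$ and may be ignored.
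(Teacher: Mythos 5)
Your proposal is correct and is essentially the paper's own (one-line) argument spelled out: since every modular function is a positive combination of the $n$ basic modular functions $s^{\{A\}}$, validity over $\modular{n}$ reduces to checking the inequality on each $s^{\{A\}}$, which is exactly your per-variable sign condition $d_A\ge 0$. The explicit linear-time bookkeeping is a fine elaboration but not a different route.
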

One example of a syntactic class with respect to which validity over entropic functions corresponds to validity over modular functions are the acyclic $\Sigma$-inequalities.
Given an acyclic set $\Sigma$ of conditionals $(\vect{V}\mid \vect{U})$, and a polymatroid $\vect{h}$ over $\vect{X}$, one can construct a modular function $\vect{f}$ such that (i) $f(\vect{X})=h(\vect{X})$, and (ii) $f(\vect{V}\mid \vect{U}) \leq h(\vect{V}\mid \vect{U})$ for all $(\vect{V}\mid \vect{U})\in \Sigma$  \cite{Ngopods18}.  
Consequently, validity of acyclic $\Sigma$-inequalities \eqref{eq:sigeq} coincides for polymatroids, entropic functions, and modular functions. Thus it is known that all the aforementioned bounds (modular, coverage, entropic, and polymatroid bounds) coincide and are polynomial-time computable if  $\Sigma$ is acyclic \cite{Ngopods18}. With respect to the information inequality problem, we analogously obtain the following result.
%In particular, the information inequality problem is in polynomial time in this restricted case.
\begin{proposition}
Let $\modular{n} \subseteq K \subseteq \polym{n}$.
The information inequality problem over $K$ w.r.t. acyclic $\Sigma$-inequalities is in polynomial time.
\end{proposition}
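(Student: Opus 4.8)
The plan is to show that, for acyclic $\Sigma$-inequalities, validity does not depend on the choice of $K$ between $\modular{n}$ and $\polym{n}$, and then to decide the single fixed case $K = \modular{n}$, which is already known to be tractable. Fix an acyclic $\Sigma$-inequality $\phi = \phi_{\Sigma}(\vect{X},\vect{w})$. First I would record the trivial monotonicity of validity: if $S \subseteq S'$ then $S' \models \phi$ entails $S \models \phi$. Applied to the sandwich $\modular{n} \subseteq K \subseteq \polym{n}$, this gives the chain
\[
\polym{n} \models \phi \;\Longrightarrow\; K \models \phi \;\Longrightarrow\; \modular{n} \models \phi.
\]

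To turn this into a cycle of equivalences it suffices to prove $\modular{n} \models \phi \Rightarrow \polym{n} \models \phi$, and here I would invoke the construction of \cite{Ngopods18} recalled above: for an acyclic $\Sigma$ and any polymatroid $\vect{h}$ there is a modular function $\vect{f}$ with $f(\vect{X}) = h(\vect{X})$ and $f(\sigma) \leq h(\sigma)$ for every $\sigma \in \Sigma$. Assuming $\modular{n} \models \phi$ and taking an arbitrary polymatroid $\vect{h}$, non-negativity of the weights $w_\sigma$ yields
\[
\sum_{\sigma \in \Sigma} w_\sigma\, h(\sigma) \;\geq\; \sum_{\sigma \in \Sigma} w_\sigma\, f(\sigma) \;\geq\; f(\vect{X}) \;=\; h(\vect{X}),
\]
where the middle step is validity of $\phi$ on the modular function $\vect{f}$. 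Hence $\polym{n} \models \phi$, so the three conditions coincide, and in particular $K \models \phi$ iff $\modular{n} \models \phi$, independently of $K$.

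Consequently the algorithm need not know $K$ at all: it simply decides $\modular{n} \models \phi$, which is in polynomial time by the preceding proposition. Indeed, every modular function is a positive combination of the $n$ basic modular functions $s^{\{A\}}$, so the linear left-hand side of $\phi$ is non-negative over all modular functions exactly when it is non-negative on each of these $n$ generators---a check computable directly from the input sequence $((w_\sigma,\sigma))_{\sigma \in \Sigma}$, and preceded if desired by a polynomial-time test that the associated directed graph is acyclic. Since the substantive content (the construction of $\vect{f}$) is imported, I expect the only delicate points to be bookkeeping: that the sign convention $w_\sigma \geq 0$ orients the first displayed inequality correctly, and that the reduction is genuinely polynomial rather than merely effective, which holds because there are only linearly many modular generators.
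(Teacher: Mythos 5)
Your proposal is correct and follows essentially the same route as the paper: the paper also reduces the whole sandwich $\modular{n} \subseteq K \subseteq \polym{n}$ to the single case $K=\modular{n}$ via the construction of \cite{Ngopods18} (a modular $\vect{f}$ with $f(\vect{X})=h(\vect{X})$ and $f(\sigma)\leq h(\sigma)$ for $\sigma\in\Sigma$), and then appeals to the preceding proposition that validity over modular functions is decidable in polynomial time by checking the $n$ basic modular generators. Your explicit chain of implications and the check on generators is exactly the argument the paper leaves implicit.
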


\subsection{Monotone functions}
%As a reminder, entropic functions fall inbetween modular and monotone functions. Having demonstrated why the information inequality problem is in polynomial time for the modular case, we 
Next we will show that, at the other extreme direction, the information inequality problem over monotone functions is also in polynomial time. 
\begin{theorem}\label{thm:monptime}
The information inequality problem over monotone functions is in polynomial time.
%$\iips{\monotone{n}}$ is in $\P$. %In particular, for each input $\phi $ of $\iips{\monotone{n}}$, 
\end{theorem}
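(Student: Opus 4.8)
The plan is to characterize validity over $\monotone{n}$ by a polynomial-size transportation (max-flow) feasibility condition. First I would normalize the input inequality $\sum_{i=1}^k c_i h(\vect{X}_i) \geq 0$: discard every term whose set is $\emptyset$ (as $h(\emptyset)=0$), aggregate terms sharing the same set into a single net coefficient, collect the distinct sets with positive net coefficient into a family of \emph{supplies} $\{(\vect{X}_i, c_i)\}_{i \in P}$ (with $c_i>0$) and those with negative net coefficient into \emph{demands} $\{(\vect{Y}_j, d_j)\}_{j \in N}$ (with $d_j>0$ the absolute value of the net coefficient). The inequality then reads $\sum_{i\in P} c_i h(\vect{X}_i) \geq \sum_{j \in N} d_j h(\vect{Y}_j)$, and note that every set in $P$ and $N$ is nonempty and each set lies in at most one of the two families.

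The heart of the argument is the claim that this inequality holds for \emph{every} monotone $\vect{h}$ if and only if the bipartite transportation problem with supplies $c_i$ at $\vect{X}_i$, demands $d_j$ at $\vect{Y}_j$, and an edge $i \to j$ exactly when $\vect{Y}_j \subseteq \vect{X}_i$, admits a flow meeting all demands (each demand fully satisfied, no supply overused). The \emph{soundness} direction (feasible flow $\Rightarrow$ valid) is a routine telescoping: given $(f_{ij})$ with $\sum_i f_{ij} = d_j$ and $\sum_j f_{ij} \leq c_i$, monotonicity gives $h(\vect{Y}_j) \leq h(\vect{X}_i)$ whenever $\vect{Y}_j \subseteq \vect{X}_i$, and together with $\vect{h} \geq 0$ (which follows from $h(\cdot) \geq h(\emptyset)=0$) yields $\sum_j d_j h(\vect{Y}_j) = \sum_{i,j} f_{ij} h(\vect{Y}_j) \leq \sum_{i,j} f_{ij} h(\vect{X}_i) \leq \sum_i c_i h(\vect{X}_i)$.

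The interesting direction is \emph{completeness}: if no feasible flow exists, I would exhibit a monotone counterexample. By the defect form of Hall's theorem for transportation (equivalently, max-flow/min-cut), infeasibility yields a set $J \subseteq N$ of demands with $\sum_{j \in J} d_j > \sum_{i \in N(J)} c_i$, where $N(J) = \{i \in P : \vect{Y}_j \subseteq \vect{X}_i \text{ for some } j \in J\}$ is the set of supplies dominating some demand in $J$. Taking $\vect{h}$ to be the indicator of the up-set generated by $\{\vect{Y}_j : j \in J\}$ produces a monotone function (it vanishes on $\emptyset$ precisely because every $\vect{Y}_j$ is nonempty), for which the right-hand side is at least $\sum_{j\in J} d_j$ while the left-hand side equals exactly $\sum_{i \in N(J)} c_i$, violating the inequality. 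Checking that this up-set indicator is genuinely monotone and that it isolates precisely the deficient supplies $N(J)$ is the step I expect to require the most care.

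Finally, the algorithmic conclusion is immediate: the transportation instance has at most $k$ supplies and $k$ demands and $O(k^2)$ edges, each edge detected by a set-containment test, and feasibility (max flow equals total demand) is decidable in polynomial time even with the binary-encoded rational capacities, e.g. by a strongly polynomial max-flow algorithm or by solving the associated polynomial-size linear program. Since the normalization is also polynomial, the whole procedure runs in polynomial time, establishing Theorem~\ref{thm:monptime}.
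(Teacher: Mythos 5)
Your proposal is correct, and it arrives at exactly the same polynomial-time algorithm as the paper: the transportation LP you describe (demand constraints $\sum_i f_{ij}=d_j$, supply constraints $\sum_j f_{ij}\leq c_i$, edges given by set containment) is literally the system $M\vect{x}\geq \vect{cd}$ built in the paper's Lemma~\ref{lem:matrix}, and your soundness direction is the paper's observation that a feasible flow exhibits $\phi$ as a positive, separable combination of monotonicity and non-negativity axioms. Where you genuinely diverge is in the completeness direction. The paper expands each coefficient into unit copies (the set representation $(S^+,S^-)$), runs an explicit augmenting-path fixed-point algorithm (Alg.~\ref{alg:mono}) to prove the characterization for integer coefficients, and then needs a Bolzano--Weierstrass limiting argument to extend from rational to real coefficients; you instead apply the defect form of Hall's theorem (max-flow/min-cut) directly to the fractional transportation instance, which handles arbitrary real or rational coefficients in one stroke and dispenses with both the pseudo-polynomial unit expansion and the compactness argument. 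The counterexample you extract from a deficient demand set $J$ --- the Boolean indicator of the up-set generated by $\{\vect{Y}_j : j\in J\}$ --- is the same witness the paper constructs in the proof of Lemma~\ref{lem:alg} (there $\calY^\uparrow$ is generated by the demands still connected to the unsaturated set $S_1$), and your computation that the left-hand side evaluates to $\sum_{i\in N(J)}c_i$ while the right-hand side is at least $\sum_{j\in J}d_j$ is sound, given that you have already aggregated coefficients so no set occurs on both sides and no $\vect{Y}_j$ is empty. In short: same LP, same counterexample, but your appeal to min-cut duality is a cleaner and more self-contained justification than the paper's algorithmic detour, at the (mild) cost of invoking max-flow/min-cut as a black box rather than deriving the decomposition constructively.
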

Analogously to the previous section, this semantic modification of the information inequality problem helps us  identify syntactic classes with respect to which the general information inequality problem is tractable. %for any syntactic restriction which yields a semantic collapse in either direction.
Before we proceed into details, let us give a short sketch of the proof of this theorem.
We associate an information inequality \eqref{eq:ineq} with 
a \emph{set representation}
$(S^+, S^-)$, where 
\begin{align*}
S^+ \coloneqq& \{(\vect{X}_i,k) \mid k \in [|c_i|], c_i >0 \}\text{, and}\\
S^- \coloneqq& \{(\vect{X}_i,k) \mid k \in [ |c_i|], c_i <0 \}.
\end{align*}
 Then,  we present a fixed-point algorithm (Alg. \ref{alg:mono}) to capture validity of information inequalities over monotone functions. This algorithm iteratively decomposes  an input inequality into monotonicity axioms. For this, it maintains a bipartite directed graph $G$ initialized as $G_S = (S^+ \cup S^-, E)$, where $E$ is the set of edges from $S^+$ to $S^-$ that correspond to  possible monotonicity axioms (\emph{forward edges}).  The initial graph contains no edges from $S^-$ to $S^+$ (\emph{backward edges}). The number of these backward edges, which represent those monotonicity axioms that are currently selected for the decomposition,  is increased in each iteration. % and the number of such edges to represent those monotonicity axioms that are currently selected for the decomposition.
  %and all possible monotonicity axioms are represented as as edges from $S^+$ to $S^-$ (\emph{forward edges}).  
  %
 Although the algorithm as such does not run in polynomial time (it runs in \emph{pseudo-polynomial time}, i.e., in polynomial time in the length of the input and the numeric values of the coefficients), it does guide us toward a characterization of valid inequalities as positive combinations of monotonicity axioms and \emph{non-negativity axioms} $h(\vect{X})\geq 0$, which are derivable as combinations of the first two polymatroidal axioms. These combinations are polynomial in the input length, and consequently can be found through a linear program of polynomial size, which entails the desired result.

\algrenewcommand{\algorithmiccomment}[1]{\hfill// \eqparbox{COMMENT\thealgorithm}{#1}}
\algnewcommand{\LongComment}[1]{\hfill// \begin{minipage}[t]{\eqboxwidth{COMMENT\thealgorithm}}#1\strut\end{minipage}}

\begin{algorithm}
  \caption{Decomposition algorithm for inequalities. \label{alg:mono}}
   \hspace*{\algorithmicindent} \textbf{Input: }Set representation $S=(S^+,S^-)$ of $\phi$  \\[2pt]
   \hspace*{\algorithmicindent} \textbf{Output:} true iff $\phi$ is $\mono{n}$-valid
  \begin{algorithmic}[1]
  %\Procedure{MyProcedure}{}
  \State $G \gets G_{S}, S_0 \gets S^+, S_1 \gets S^-$ \vskip 3pt
  \While {$G$ contains a path $u_0, \dots ,u_m$ from $S_0$ to $S_1$} \vskip 3pt
  \State remove $u_0$ from $S_0$ and $u_m$ from $S_1$%, and $(p_i,q_i)$ ($i\in [n]$) from $F$  
  \vskip 3pt
  \State remove from $G$ (backward) edges $(u_1,u_2), (u_3, u_4), \dots ,(u_{m-2}, u_{m-1})$ \hspace{-.5cm}%\Comment{remove backward edges}  \vskip 3pt
  \State add to $G$ (backward) edges $(u_1,u_0), (u_3, u_2), \dots ,(u_{m}, u_{m-1})$%\Comment{add backward edges}\vskip 3pt
  \EndWhile \vskip 3pt
  \Return true if $S_1$ is empty, otherwise false
  %\EndProcedure
  \end{algorithmic}
  \end{algorithm}
  The following example demonstrates the use of Alg. \ref{alg:mono}.

  \begin{example}
Consider an information inequality of the form 
\begin{equation}\label{eq:example}
\vect{X}\vect{Y} + \vect{Y}\vect{Z} + 2\vect{X}\vect{Z} + \vect{X} \geq \vect{Y} + 3\vect{Z}.
\end{equation}
 The set representation is
$(S^+, S^-)$, where 
\begin{align*}
S^+ =& \{(\vect{X}\vect{Y},1), (\vect{Y}\vect{Z},1), (\vect{X}\vect{Z},1), (\vect{X}\vect{Z},2), (\vect{X},1)\}, \text{ and}\\
S^-=&\{(\vect{Y},1), (\vect{Z},1), (\vect{Z},2), (\vect{Z},3)\}.
\end{align*}
Clearly, the inequality \eqref{eq:example} is valid over monotone functions. Alg. \ref{alg:mono} also returns true after four iterations. The leftmost graph in Fig. \ref{fig:tikz} illustrates the starting point for the last iteration in one possible implementation. The edges from right to left (backward edges) represent monotonicity axioms that have been selected in the previous iteration. The edges from left to right (forward edges), some of which are visible in the middle graph of Fig. \ref{fig:tikz}, represent possible monotonicity axioms.
Since there is a path from $S_0$ to $S_1$, we can increase the number of selected monotonicity axioms by deleting the backward edges in the path, and  changing the direction of the forward edges in the path.
%This graph has three backward edges, and multiple forward edges which are all omitted in the figure.
%The center graph displays forward edges which together with the backward edges form a path from $S_0$ to $S_1$. 
 The rightmost graph illustrates the result of this modification. Since $S_1$ becomes empty, the algorithm terminates returning true. The final state of the algorithm represents an integral decomposition of Eq. \eqref{eq:example} into monotonicity and non-negativity axioms.
\end{example}

\begin{figure}[h!]
\begin{center}
\scalebox{.85}{
\begin{tabular}{ccccc}
\hspace{-10mm}
\belowbaseline[0pt]{
\begin{tikzpicture}
\def\leftn{5}
\def\rightn{5}
% Define the left side nodes
    \node [rectangle, draw] (left1) at (0,1) {$(\vect{X},1)$};
     \node [rectangle, draw] (left2) at (0,2) {$(\vect{X}\vect{Y},1)$};
    \node [rectangle, draw] (left3) at (0,3+.2) {$(\vect{Y}\vect{Z},1)$};
        \node [rectangle, draw] (left4) at (0,4+.2) {$(\vect{X}\vect{Z},1)$};
    \node [rectangle, draw] (left5) at (0,5+.2) {$(\vect{X}\vect{Z},2)$};
% Define the right side nodes
    \node [rectangle, draw] (right2) at (2.5,2) {$(\vect{Z},1)$};
    \node [rectangle, draw] (right3) at (2.5,3+.2) {$(\vect{Y},1)$};
        \node [rectangle, draw] (right4) at (2.5,4+.2) {$(\vect{Z},2)$};
    \node [rectangle, draw] (right5) at (2.5,5+.2) {$(\vect{Z},3)$};
% Draw the edges
\foreach \x in {3,...,5}
    \draw [->] (right\x) -- (left\x);
% Draw the dashed rectangle around the left side
\draw[-, dashed, gray] (-1, 2.6) -- (1, 2.6);
\draw [dashed, gray] (-1,0.2) rectangle (1,\leftn+.9) node[anchor=south west] at (-1,0.2) {$S_0$};
\draw [dashed, gray] (-1,0.2) rectangle (1,\leftn+.9) node[anchor=north, yshift=0pt] at (0,\leftn+1.5) {$S^+$};
% Draw the dashed rectangle around the right side
\draw[-, dashed, gray] (1.5, 2.6) -- (3.5, 2.6);
\draw [dashed, gray] (1.5,1.2) rectangle (3.5,\rightn+.9) node[anchor=south east] at (3.5,1.2) {$S_1$};
\draw [dashed, gray] (1.5,1.2) rectangle (3.5,\rightn+.9) node[anchor=north, yshift=0pt] at (2.5,\rightn+1.5) {$S^-$};
\end{tikzpicture}
}
        &
        \belowbaseline[85pt]{
% Arrow picture
$\xrightarrow{\text{}}$
}
        &
\belowbaseline[0pt]{
\begin{tikzpicture}
\def\leftn{5}
\def\rightn{5}
% Define the left side nodes
    \node [rectangle, draw] (left1) at (0,1) {$(\vect{X},1)$};
     \node [rectangle, draw] (left2) at (0,2) {$(\vect{X}\vect{Y},1)$};
    \node [rectangle, draw] (left3) at (0,3+.2) {$(\vect{Y}\vect{Z},1)$};
        \node [rectangle, draw] (left4) at (0,4+.2) {$(\vect{X}\vect{Z},1)$};
    \node [rectangle, draw] (left5) at (0,5+.2) {$(\vect{X}\vect{Z},2)$};
% Define the right side nodes
    \node [rectangle, draw] (right2) at (2.5,2) {$(\vect{Z},1)$};
    \node [rectangle, draw] (right3) at (2.5,3+.2) {$(\vect{Y},1)$};
        \node [rectangle, draw] (right4) at (2.5,4+.2) {$(\vect{Z},2)$};
    \node [rectangle, draw] (right5) at (2.5,5+.2) {$(\vect{Z},3)$};
% Draw the edges
\foreach \x in {3,...,5}
    \draw [->] (right\x) -- (left\x);
\draw [->] (left2.north east) -- (right3.south west);
\draw [->] (left3.north east) -- (right4.south west);
\draw [->] (left4.south east) -- (right2.north west);
% Draw the dashed rectangle around the left side
\draw[-, dashed, gray] (-1, 2.6) -- (1, 2.6);
\draw [dashed, gray] (-1,0.2) rectangle (1,\leftn+.9) node[anchor=south west] at (-1,0.2) {$S_0$};
\draw [dashed, gray] (-1,0.2) rectangle (1,\leftn+.9) node[anchor=north, yshift=0pt] at (0,\leftn+1.5) {$S^+$};
% Draw the dashed rectangle around the right side
\draw[-, dashed, gray] (1.5, 2.6) -- (3.5, 2.6);
\draw [dashed, gray] (1.5,1.2) rectangle (3.5,\rightn+.9) node[anchor=south east] at (3.5,1.2) {$S_1$};
\draw [dashed, gray] (1.5,1.2) rectangle (3.5,\rightn+.9) node[anchor=north, yshift=0pt] at (2.5,\rightn+1.5) {$S^-$};
\end{tikzpicture}
}
&
        \belowbaseline[85pt]{
% Arrow picture
$\xrightarrow{\text{}}$
}
          &
\belowbaseline[0pt]{
\begin{tikzpicture}
\def\leftn{5}
\def\rightn{5}
% Define the left side nodes
    \node [rectangle, draw] (left1) at (0,1) {$(\vect{X},1)$};
     \node [rectangle, draw] (left2) at (0,2+.2) {$(\vect{X}\vect{Y},1)$};
    \node [rectangle, draw] (left3) at (0,3+.2) {$(\vect{Y}\vect{Z},1)$};
        \node [rectangle, draw] (left4) at (0,4+.2) {$(\vect{X}\vect{Z},1)$};
    \node [rectangle, draw] (left5) at (0,5+.2) {$(\vect{X}\vect{Z},2)$};
% Define the right side nodes
    \node [rectangle, draw] (right2) at (2.5,2+.2) {$(\vect{Y},1)$};
    \node [rectangle, draw] (right3) at (2.5,3+.2) {$(\vect{Z},2)$};
        \node [rectangle, draw] (right4) at (2.5,4+.2) {$(\vect{Z},1)$};
    \node [rectangle, draw] (right5) at (2.5,5+.2) {$(\vect{Z},3)$};
% Draw the edges
\foreach \x in {2,...,5}
    \draw [->] (right\x) -- (left\x);
% Draw the dashed rectangle around the left side
\draw[-, dashed, gray] (-1, 1.6) -- (1, 1.6);
\draw [dashed, gray] (-1,0.2) rectangle (1,\leftn+.9) node[anchor=south west] at (-1,0.2) {$S_0$};
\draw [dashed, gray] (-1,0.2) rectangle (1,\leftn+.9) node[anchor=north, yshift=0pt] at (0,\leftn+1.5) {$S^+$};
% Draw the dashed rectangle around the right side
\draw [dashed, gray] (1.5,1.6) rectangle (3.5,\rightn+.9) node[anchor=south east] at (3.5,1) {$S_1=\emptyset$};
\draw [dashed, gray] (1.5,1.6) rectangle (3.5,\rightn+.9) node[anchor=north, yshift=0pt] at (2.5,\rightn+1.5) {$S^-$};
\end{tikzpicture}
}
\end{tabular}
}
\end{center}
\caption{Last iteration of Alg. \ref{alg:mono} for Eq. \eqref{eq:example}. \label{fig:tikz}}
\end{figure}
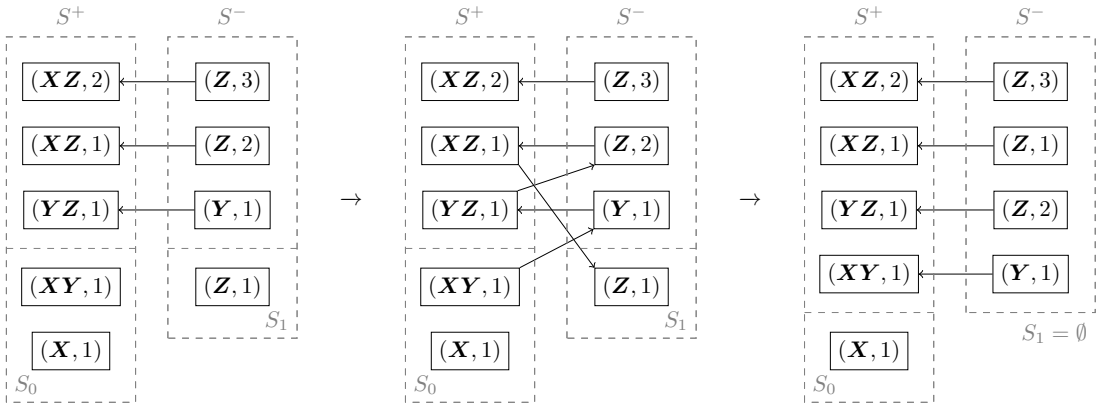

An inequality $\phi$ of the form \eqref{eq:ineq} can be identified with its \emph{coefficient function} $\vect{c}_{\phi}$, where $c_{\phi}(\vect{X})= c$ if the term $ch(\vect{X})$ appears in \eqref{eq:ineq}, and otherwise $c_{\phi}(\vect{X})=0$. We then say that $\phi$ is a (positive) combination of inequalities $\phi_1,\dots,\phi_n$ if $\vect{c}_{\phi}$ is a (positive) combination of $\vect{c}_{\phi_1}, \ldots ,\vect{c}_{\phi_n}$.
%an inequality $\phi$ of the form \eqref{eq:ineq} and $c \in \mathbb{R}$, we write $c\phi$ for the inequality 
%$c(c_1h(\vect{X}_1) + \dots +c_kh(\vect{X}_k))\geq 0$.
%For a sequence of inequalities $\phi_1, \dots ,\phi_n$, an inequality of the form $c_1\phi_1 + \dots + c_n \phi_n$, $c_i > 0$, is called a \emph{positive combination} of 
%$\phi_1, \dots ,\phi_n$.
We furthermore say that a combination of functions $c_1\vect{h}_1 + \dots + c_n \vect{h}_n$ is \emph{separable} if there exist no $i,j$ and $\vect{Y}$ such that ${h_i}(\vect{Y})<0$ and ${h_j}(\vect{Y})>0$, while $c_i\neq 0 \neq c_j$. This definition is extended to combinations of inequalities in the natural way.
%The combination is called \emph{separable} if each term $h(\vect{X})$ appears in $\phi_1, \dots ,\phi_n$ only with a positive coefficient, or only with a negative coefficient.   
 For example, any positive combination of $h(A) + h(B) \geq 0$ and  $h(B) -h(AB) \geq 0$ is separable, but no positive combination of inequalities $h(A) - h(B) \geq 0$ and $h(A) +h(B) \geq 0$ is separable. In particular, if $\phi$ is a positive and separable combination of monotonicity and non-negativity axioms, then $h(\vect{X})$ cannot appear in the left-hand side of $\psi_0$ and in the right-hand side of $\psi_1$, for any two axioms $\psi_0$ and $\psi_1$  appearing in the combination.
% each axiom appearing in the combination is either of the form $h(\vect{XY})\geq h(\vect{X})$ or $h(\vect{X})\geq 0$,  where the left-hand side (resp. right-hand side) terms are only associated with a positive (resp. negative) coefficient in $\phi$. 
%The combination is \emph{integral} if each coefficient $c_i$ is an integer.

We also say that a set function $\vect{h}$ is Boolean-valued if it maps every $\vect{X}$ to either $0$ or $1$. The proof the following lemma is located in Appendix \ref{sect:alg}.
%Define $\monotone{n}^{0,1}$ as the subset of $\monotone{n}$ consisting of the functions that map sets to the Boolean values $0$ or $1$.
\begin{restatable}{lemma}{alg}\label{lem:alg}
Let $\phi$ be an information inequality of the form
\begin{equation}\label{eq:inf2}
c_1h(\vect{X}_1) + \dots +c_kh(\vect{X}_k)\geq 0 \quad (c_i\in \mathbb R).
\end{equation}
The following are equivalent:
\begin{enumerate}
\item $\phi$ is valid over monotone functions. %$\monotone{n}$.
\item $\phi$ is valid over monotone, Boolean-valued functions. %$\monotone{n}^{0,1}$.
\item $\phi$ is a positive and separable combination of  monotonicity and non-negativity axioms.
\end{enumerate}
%$\iips{\monotone{n}}$ is decidable in pseudo-polynomial time in the coefficient size.
\end{restatable}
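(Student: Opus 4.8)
The plan is to prove the cycle of implications $(3)\Rightarrow(1)\Rightarrow(2)\Rightarrow(3)$. The two directions $(3)\Rightarrow(1)$ and $(1)\Rightarrow(2)$ are immediate: every monotonicity axiom $h(\vect{Z})\geq h(\vect{X})$ (with $\vect{X}\subseteq\vect{Z}$) and every non-negativity axiom $h(\vect{Y})\geq 0$ holds over all monotone functions (the latter because $h(\vect{Y})\geq h(\emptyset)=0$), and validity is preserved under positive combinations, so $(3)\Rightarrow(1)$; and $(1)\Rightarrow(2)$ holds because Boolean-valued monotone functions form a subclass of monotone functions. All the work is therefore in $(2)\Rightarrow(3)$. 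First I would scale $\phi$ by the least common denominator of its coefficients so that all $c_i$ are integers; this changes neither validity nor the property of being a positive separable combination of axioms (up to the same scaling). I then read the set representation $(S^+,S^-)$ as the two sides of a bipartite graph, with a forward edge from $(\vect{X}_i,k)\in S^+$ to $(\vect{X}_j,l)\in S^-$ exactly when $\vect{X}_j\subseteq\vect{X}_i$, i.e. exactly when the pair $h(\vect{X}_i)-h(\vect{X}_j)$ is a monotonicity axiom. Algorithm \ref{alg:mono} is the standard augmenting-path method (its backward edges are the current matching and its forward edges the candidate matches), so it terminates with a \emph{maximum} matching of this graph and returns true precisely when that matching saturates $S^-$.

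For the ``true'' case I would read the decomposition directly off the matching. Each matched pair $((\vect{X}_i,k),(\vect{X}_j,l))$ contributes the monotonicity axiom $h(\vect{X}_i)-h(\vect{X}_j)\geq 0$, and each positive copy $(\vect{X}_i,k)$ left unmatched contributes the non-negativity axiom $h(\vect{X}_i)\geq 0$. Counting over copies, every positively-signed set appears positively exactly as often as its coefficient, and every negatively-signed set appears negatively exactly as often as $|c_\phi|$ (all its copies are matched, since the matching saturates $S^-$), so the combination is exactly $\phi$. It is separable because any set with nonzero coefficient lies in $S^+$ or in $S^-$ but never in both: negative occurrences arise only as the $S^-$-endpoints of matched pairs and positive occurrences only from $S^+$-elements, so no set occurs with both signs. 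This yields $(3)$.

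The main obstacle is the complementary case: when the maximum matching fails to saturate $S^-$, I must turn this combinatorial failure into an explicit monotone Boolean-valued counterexample, contradicting $(2)$. Here I would invoke Hall's theorem to obtain a deficient set $T\subseteq S^-$ with $|N(T)|<|T|$. Replacing $T$ by the union of all copies of the underlying sets occurring in it leaves $N(T)$ unchanged while not decreasing $|T|$, so I may assume $T$ comprises the full copy-classes of a family $\mathcal{D}$ of negatively-signed sets. I then define $h$ as the indicator of the up-set $\mathcal{U}=\{\vect{Y}\mid \vect{X}\subseteq\vect{Y}\text{ for some }\vect{X}\in\mathcal{D}\}$, which is a monotone, Boolean-valued function with $h(\emptyset)=0$ (note $\emptyset\notin S^-$, since $h(\emptyset)=0$ is assumed of every set function). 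Evaluating the left-hand side of $\phi$ on $h$ amounts to summing $c_\phi(\vect{Y})$ over $\vect{Y}\in\mathcal{U}$: the positively-signed sets in $\mathcal{U}$ are precisely the underlying sets of $N(T)$, contributing $|N(T)|$, whereas the negatively-signed sets in $\mathcal{U}$ already contain all of $\mathcal{D}$ and hence contribute at most $-|T|$. Thus $\sum_i c_ih(\vect{X}_i)\leq |N(T)|-|T|<0$, so $h$ refutes $\phi$, which is the required contradiction.

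I expect the counterexample construction to be the crux, as it is where the matching deficiency is converted into an explicit monotone Boolean function; by contrast the reduction to integer coefficients, the passage to full copy-classes, and the bookkeeping showing that the matched decomposition reproduces $\phi$ and is separable are all routine once the bipartite-matching picture is fixed. I would close with the remark that the equivalence $(1)\Leftrightarrow(3)$ \emph{without} separability is also derivable from LP duality for the polyhedral cone of monotone functions; the added value of the matching argument, and the reason it is the right vehicle here, is that it simultaneously supplies the separability, a concrete combinatorial certificate, and the polynomial-size decomposition that Theorem \ref{thm:monptime} turns into a polynomial-time algorithm.
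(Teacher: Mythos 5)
Your treatment of the integer-coefficient core is essentially the paper's own argument: the set representation as a bipartite graph, Alg.~\ref{alg:mono} as augmenting-path matching, the decomposition read off a matching that saturates $S^-$ (matched pairs give monotonicity axioms, unmatched positive copies give non-negativity axioms, and separability follows because no set lies in both $S^+$ and $S^-$), and, in the failure case, a monotone Boolean-valued counterexample defined as the indicator of the up-set generated by a deficient family of negatively-signed sets. Invoking Hall's theorem for the deficient set is only cosmetically different from the paper, which extracts the same certificate from the reachability structure of the final graph (the sets in $S^-$ connected to the unsaturated part $S_1$); the two are the standard K\H{o}nig/Hall duals of one another.

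There is, however, a genuine gap. The lemma is stated for $c_i\in\mathbb{R}$, and your only reduction step is to ``scale $\phi$ by the least common denominator of its coefficients so that all $c_i$ are integers.'' That step is meaningless for irrational coefficients, so your argument proves $(2)\Rightarrow(3)$ only for $c_i\in\mathbb{Q}$. The paper devotes a separate claim to the passage from $\mathbb{Q}$ to $\mathbb{R}$: it approximates the coefficient vector monotonically from above by rational vectors $\vect{c}_{\phi_n}$ (chosen so that signs are preserved), obtains a separable decomposition of each $\phi_n$ over a \emph{fixed} finite list of axioms, uses separability to bound each decomposition coefficient $d^n_l$ by $c^1_i$, and then applies Bolzano--Weierstrass to extract a convergent subsequence whose limit is the desired separable decomposition of $\phi$. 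Some argument of this kind is needed: the claim that validity over finitely many Boolean monotone functions forces membership in the (rationally generated) cone of axiom combinations is a statement about real coefficient vectors, and it does not follow formally from its restriction to rational ones without an additional closure or density argument. Your closing remark about LP duality hints at an alternative repair (both conditions define closed polyhedral cones cut out by rational data, so agreement on rational points extends to all of $\mathbb{R}^k$), but as written the real case is unproven.
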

\vspace{3mm}

Since linear programming is in polynomial time, we can now establish Theorem \ref{thm:monptime} as a consequence of the following lemma. 
This lemma will also be applied in the next section that focuses on simple $\Sigma$-inequalities. % 5 over simple $\Sigma$.

\begin{lemma}\label{lem:matrix}
For each information inequality $\phi$ of the form
\begin{equation}\label{eq:matrix}
c_1h(\vect{X}_1) + \dots +c_kh(\vect{X}_k)\geq d_1h(\vect{Y}_1) + \dots +d_lh(\vect{Y}_l) \quad (c_i,d_i>0),
\end{equation}
 there exists a matrix 
  $M$ %\in \mathbb \{-1,0,1\}^{(k+l)\times (kn)}$ 
  such that the inequality $M\vect{x} \geq \vect{cd}$ has a solution $\vect{x}\geq 0$ if and only if $\phi$ is valid over monotone functions. In particular, $M$ can be constructed in polynomial time from $\phi$ (with rational coefficients).
%$\iips{\monotone{n}}$ is decidable in pseudo-polynomial time in the coefficient size.
\end{lemma}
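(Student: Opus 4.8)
The plan is to reduce monotone-validity of $\phi$ to feasibility of a bipartite flow, leveraging the characterization in Lemma~\ref{lem:alg}. Rewriting \eqref{eq:matrix} as $\sum_i c_i h(\vect{X}_i)-\sum_j d_j h(\vect{Y}_j)\ge 0$, that lemma tells us $\phi$ is valid over $\monotone{n}$ exactly when it is a positive and separable combination of monotonicity axioms $h(\vect{A})-h(\vect{B})\ge 0$ (with $\vect{B}\subseteq\vect{A}$) and non-negativity axioms $h(\vect{A})\ge 0$. First I would pin down the shape of such a combination. Separability forces every set to occur with a single sign across the chosen axioms; since $h(\vect{X}_i)$ has positive coefficient in $\phi$ and $h(\vect{Y}_j)$ negative coefficient, no $\vect{X}_i$ can be the smaller set of a monotonicity axiom and no $\vect{Y}_j$ can be a larger set or the subject of a non-negativity axiom. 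Any set distinct from all $\vect{X}_i,\vect{Y}_j$ must occur with one sign and total coefficient $0$, hence with weight $0$, which rules out ``chained'' axioms. Thus every monotonicity axiom that appears directly connects a positive to a negative set, i.e. has the form $h(\vect{X}_i)-h(\vect{Y}_j)\ge 0$ with $\vect{Y}_j\subseteq\vect{X}_i$ (after merging identical sets I may assume the families $\{\vect{X}_i\}$ and $\{\vect{Y}_j\}$ are disjoint, so these are \emph{genuine} axioms), while each $\vect{X}_i$ may in addition feed a non-negativity axiom.

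This turns validity into a transportation feasibility problem. I introduce a variable $x_{ij}\ge 0$ for every pair with $\vect{Y}_j\subseteq\vect{X}_i$, the weight of the axiom $h(\vect{X}_i)\ge h(\vect{Y}_j)$. Matching the coefficient of each $h(\vect{Y}_j)$ in $\phi$ forces $\sum_i x_{ij}=d_j$, and the residual $c_i-\sum_j x_{ij}$ on each $h(\vect{X}_i)$ is exactly the weight of the non-negativity axiom $h(\vect{X}_i)\ge 0$, so $\sum_j x_{ij}\le c_i$. Hence $\phi$ is monotone-valid iff this system, with demands $d_j$, supplies $c_i$, and the bipartite containment graph as edge set, has a non-negative solution. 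The next step is to note that the demand equalities may be relaxed to $\sum_i x_{ij}\ge d_j$: given any solution of the relaxed system, scaling the flow into each sink $j$ by the factor $d_j/\sum_i x_{ij}\le 1$ restores equality while only decreasing out-flows, so the supply bounds still hold. The two systems are therefore equisatisfiable.

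Finally I would assemble $M$. Taking $\vect{x}=(x_{ij})$ indexed by the edges of the containment graph, I give $M$ one row per demand $j$, with $+1$ on the incident variables (encoding $\sum_i x_{ij}\ge d_j$), and one row per supply $i$, with $-1$ on the incident variables (encoding $\sum_j x_{ij}\le c_i$); the right-hand side $\vect{cd}$ is assembled from the values $d_j$ and $-c_i$. Then $M\vect{x}\ge\vect{cd}$ with $\vect{x}\ge 0$ is precisely the relaxed feasibility problem, and is solvable iff $\phi$ is monotone-valid. The matrix has $k+l$ rows, at most $kl$ columns, and entries in $\{0,\pm 1\}$; since the edges are found by testing the subset relations $\vect{Y}_j\subseteq\vect{X}_i$, $M$ is produced in time polynomial in the length of $\phi$ and its entries are rational (indeed integral).

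I expect the main obstacle to be the structural step in the first paragraph: showing, through separability, that neither intermediate sets nor chained monotonicity axioms can appear, so that the a priori large space of separable positive combinations collapses exactly onto flows of the containment graph. The relaxation of the demand equalities is the other point requiring care, as it is what permits the whole condition to be captured by the single-directional system $M\vect{x}\ge\vect{cd}$.
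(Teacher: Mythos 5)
Your proposal is correct and follows essentially the same route as the paper: both invoke Lemma~\ref{lem:alg} and encode the existence of a positive, separable combination of monotonicity and non-negativity axioms as a bipartite transportation feasibility system, with one relaxed demand row per $\vect{Y}_j$ and one supply row per $\vect{X}_i$ over variables indexed by the containment pairs $\vect{Y}_j\subseteq\vect{X}_i$. The only difference is presentational: you spell out the separability argument ruling out chained axioms and extraneous sets, and justify relaxing the demand equalities by rescaling, both of which the paper leaves implicit.
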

\begin{proof}
%Consider an information inequality $\phi$ of the form Eq. \eqref{eq:matrix}.
%We consider $\phi$ in the form $c_1h(\vect{X}_1) + \dots +c_kh(\vect{X}_k)\geq d_1h(\vect{Y}_1) + \dots +d_lh(\vect{Y}_l)$, where $c_i,d_i\in \mathbb Q_{> 0}$. %Our objective is to construct a system of linear program, which is polynomial in the size of the input, and such that it is feasible if and only if $\phi$ is a ``yes'' instance. 
 Consider first a set $\vect{Y}_i$ from the right-hand side of the inequality. Let $\vect{X}_{i_1}, \dots ,\vect{X}_{i_m}$ list all those sets $\vect{X}_j$ from the left-hand side that contain $\vect{Y}_i$ as a subset. We need to describe how the term $d_ih(\vect{Y}_i)$ is distributed to monotonicity axioms. For this, define
\begin{equation}\label{eq:mondist}
x^i_{i_1} + \dots + x^i_{i_m}  \geq d_i,
\end{equation}
where $x^i_j$ is a variable denoting the coefficient of the monotonicity axiom $h(\vect{X}_j)\geq h(\vect{Y}_i)$.
We also need to ensure that this variable does not grow exceedingly large. Consider a set $\vect{X}_j$ from the left-hand side of the inequality, and let $\vect{Y}_{j_1}, \dots ,\vect{Y}_{j_n}$ list all those sets from the right-hand side that are contained in $\vect{X}_j$.
\begin{equation}\label{eq:large}
x^{j_1}_{j} + \dots + x^{j_n}_{j} \leq c_j.
\end{equation}
Combining Eqs. \eqref{eq:mondist} and \eqref{eq:large} we obtain an inequality $M\vect{x} \geq \vect{cd}$, where $M$ is a $((k+l)\times (kn))$-matrix with  entries of from $-1,0,1$. Obviously $M$ can be constructed in polynomial time given $\phi$. Moreover, $M\vect{x} \geq \vect{cd}$ has a solution $\vect{x}  \geq \vect{0}$ if and only if $\phi$ is a positive and separable combination of  monotonicity and non-negativity axioms. %This system is obviously polynomial in the size of the input. Moreover, whether a system of linear inequations has a non-negative solution can be decided in polynomial time. It is then straightforward to check that . The statements of the theorem then follows by Lemma \ref{lem:alg}.
 The statements of the theorem then follow by Lemma \ref{lem:alg}.
\end{proof}

%, the preceding lemma immediately yields a polynomial-time procedure for checking validity over monotone functions.

\subsection{Simple inequalities}
Let us first recall the reason why normal and general polymatroids % normal and general polymatroids 
 are known to agree on the validity of simple $\Sigma${-inequalities}. % over simple $\Sigma$.
 On the one hand, any such $\Sigma${-inequality} over a variable set $\vect{X}$ can be presented in the form 
\begin{equation}\label{eq:validity}
c_1 h(\vect{X}_1) + \dots +  c_nh(\vect{X}_n) \geq d_1h(\vect{Y}_1)+ \dots + h(\vect{Y}_m) \quad (c_i,d_i>0),
\end{equation}
where each $\vect{Y}_i$ is either the full set $\vect{X}$ or some singleton set $\{X\}$. On the other hand, every polymatroid $\vect{h}$ over $\vect{X}$ can be associated with a normal polymatroid $\vect{f}$ such that $\vect{f}\leq \vect{h}$, ${f}(\vect{X})=h(\vect{X})$, and $f(X)=h(X)$ for all $X \in \vect{X}$ \cite{Suciu23}.
If $\vect{h}$ is a counterexample for Eq. \eqref{eq:validity}, then $\vect{f}$ must also be a counterexample. Since normal polymatroids are positive combinations
step functions, it follows that at least one step function is also  a counterexample.  This brings us to the following result. % \cite{Suciu23}. 
\begin{theorem}[\cite{Suciu23}]\label{thm:suciu}
Let $\phi(\vect{X})$ be an information inequality of the form Eq. \eqref{eq:validity},
where each $\vect{Y}_i$ is either the full set $\vect{X}$ or a singleton set. Then, $\phi$ is valid over step functions if and only if it is valid over polymatroids.
\end{theorem}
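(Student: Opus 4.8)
The plan is to split into the two implications, observing that only one direction carries content. From polymatroids to step functions there is nothing to do: every step function in $\step{n}$ is itself a polymatroid (it satisfies the axioms of Fig.~\ref{fig:polym}), so an inequality valid over all of $\polym{n}$ is in particular valid over $\step{n}$.

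For the converse I would argue by contraposition. Suppose some polymatroid $\vect{h}$ falsifies $\phi$, that is,
\[
c_1 h(\vect{X}_1) + \dots + c_nh(\vect{X}_n) < d_1h(\vect{Y}_1)+ \dots + d_m h(\vect{Y}_m).
\]
The essential tool is the normalization result recalled just above the theorem: there is a normal polymatroid $\vect{f}$ with $\vect{f}\leq \vect{h}$, $f(\vect{X})=h(\vect{X})$, and $f(X)=h(X)$ for every singleton $X \in \vect{X}$. I would then verify that $\vect{f}$ is again a counterexample. On the left-hand side every coefficient $c_i$ is positive and $f(\vect{X}_i)\leq h(\vect{X}_i)$, so replacing $\vect{h}$ by $\vect{f}$ can only decrease the left-hand side. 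On the right-hand side the hypothesis that each $\vect{Y}_i$ equals either $\vect{X}$ or a singleton guarantees $f(\vect{Y}_i)=h(\vect{Y}_i)$, so the right-hand side is unchanged. Hence the strict inequality persists and $\vect{f}$ falsifies $\phi$.

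It remains to descend from the normal polymatroid $\vect{f}$ to a single step function. Since $\vect{f}$ is a positive combination $\sum_j \lambda_j \vect{s}_j$ of step functions $\vect{s}_j$ with $\lambda_j>0$, and the difference between the left- and right-hand sides of $\phi$ is a linear functional of its argument, the value of this functional at $\vect{f}$ is the $\lambda_j$-weighted sum of its values at the $\vect{s}_j$. As this sum is negative and every $\lambda_j$ is positive, at least one step function $\vect{s}_j$ makes the functional negative, i.e.\ falsifies $\phi$. This contradicts validity over $\step{n}$, completing the contrapositive.

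The main obstacle is pinpointing why the syntactic restriction on the $\vect{Y}_i$ is indispensable rather than cosmetic: the cited normalization lemma only promises that $\vect{f}$ matches $\vect{h}$ on $\vect{X}$ and on singletons, while $\vect{f}$ may strictly drop below $\vect{h}$ on other sets. The substitution of $\vect{f}$ for $\vect{h}$ therefore preserves a counterexample precisely because the right-hand side is assembled only from those privileged sets; were a non-singleton proper subset to appear with a positive coefficient on the right, decreasing its value could destroy the violation. The genuinely deep ingredient---the existence of the dominated $\vect{f}$ agreeing with $\vect{h}$ on $\vect{X}$ and all singletons---is imported from the cited work of Suciu, so the surrounding argument is the elementary monotonicity reasoning that this restriction unlocks.
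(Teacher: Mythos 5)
Your proof is correct and follows essentially the same route as the paper: the nontrivial direction proceeds by taking a polymatroid counterexample $\vect{h}$, passing to the dominated normal polymatroid $\vect{f}$ from \cite{Suciu23} that agrees with $\vect{h}$ on $\vect{X}$ and on singletons (so the left-hand side can only decrease while the right-hand side is preserved), and then using linearity over the positive combination of step functions to extract a single step-function counterexample. Your closing remarks on why the restriction on the $\vect{Y}_i$ is essential accurately reflect the role this hypothesis plays in the paper's argument as well.
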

Since $\step{n}\subseteq \normal{n}\subseteq \ent{n}\subseteq \polym{n}$, it follows that validity for simple $\Sigma${-inequalities} coincides for step functions, normal polymatroids, entropic functions, and polymatroids. If we remove terms of the form $h(\vect{X})$ from the right-hand side of Eq. \eqref{eq:validity}, the previous result extends to monotone functions. 
\begin{theorem}\label{lem:unary}
Let $\phi(\vect{X})$ be an information inequality of the form Eq. \eqref{eq:validity},
where each $\vect{Y}_i$ is a singleton set. Then, $\phi$ is valid over step functions if and only if it is valid over monotone functions.
\end{theorem}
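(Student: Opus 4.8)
The plan is to prove the two directions separately, leaning on the characterization of monotone validity supplied by Lemma~\ref{lem:alg}. The ``if'' direction is immediate: since every step function is monotone (indeed $\step{n}\subseteq \monotone{n}$), validity over monotone functions entails validity over step functions, regardless of the shape of $\phi$.

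For the ``only if'' direction I would argue the contrapositive and reduce an arbitrary monotone counterexample to a step-function counterexample. By Lemma~\ref{lem:alg}, $\phi$ is valid over monotone functions if and only if it is valid over monotone, Boolean-valued functions; hence it suffices to show that a monotone Boolean-valued counterexample to $\phi$ yields a step-function counterexample. Suppose $\vect{h}$ is monotone and Boolean-valued with $\sum_i c_i h(\vect{X}_i) < \sum_j d_j h(\vect{Y}_j)$. I would set $\vect{V}\coloneqq \{X \in \vect{X}\mid h(\{X\})=1\}$ and consider the step function $s^{\vect{V}}$ of Eq.~\eqref{eq:step}.

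The crux is then a term-by-term comparison of $s^{\vect{V}}$ with $\vect{h}$. On the right-hand side each $\vect{Y}_j$ is a singleton $\{X_j\}$, and by construction $s^{\vect{V}}(\{X_j\})=1$ precisely when $X_j\in\vect{V}$, i.e.\ precisely when $h(\{X_j\})=1$; thus $s^{\vect{V}}(\vect{Y}_j)=h(\vect{Y}_j)$ for every $j$. On the left-hand side only the inequality $s^{\vect{V}}(\vect{X}_i)\le h(\vect{X}_i)$ is needed, which suffices because $c_i>0$: if $s^{\vect{V}}(\vect{X}_i)=1$ then $\vect{X}_i$ meets $\vect{V}$, so some $X\in\vect{X}_i$ has $h(\{X\})=1$, and monotonicity together with Boolean-valuedness force $h(\vect{X}_i)=1$; the case $s^{\vect{V}}(\vect{X}_i)=0$ is trivial. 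Combining the two comparisons gives $\sum_i c_i s^{\vect{V}}(\vect{X}_i)\le \sum_i c_i h(\vect{X}_i) < \sum_j d_j h(\vect{Y}_j)=\sum_j d_j s^{\vect{V}}(\vect{Y}_j)$, so $s^{\vect{V}}$ is the desired step-function counterexample.

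The main point to get right is exactly \emph{where} the singleton restriction is used: it is what makes the right-hand-side terms match \emph{with equality} under the passage from $\vect{h}$ to $s^{\vect{V}}$, whereas the left-hand side only requires an inequality in the favorable direction. This asymmetry would break for a full-set term $h(\vect{X})$ on the right, since a monotone Boolean-valued $\vect{h}$ may have $h(\vect{X})=1$ while all singletons vanish, forcing $\vect{V}=\emptyset$ and $s^{\vect{V}}(\vect{X})=0$; this is precisely why Theorem~\ref{thm:suciu} can retain full-set terms only up to polymatroids and not up to monotone functions.
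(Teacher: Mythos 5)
Your proposal is correct and follows essentially the same route as the paper: both pass to a monotone Boolean-valued counterexample via Lemma~\ref{lem:alg}, define the step function from the set of variables that $h$ maps to $1$, observe equality on the singleton right-hand-side terms and the inequality $s^{\vect{V}}(\vect{Z})\le h(\vect{Z})$ on the left. Your closing remark on exactly where the singleton restriction is needed is a useful elaboration but does not change the argument.
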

\begin{proof}
Since step functions are monotone, we only need to consider the ``only-if'' direction. To show the contraposition, assume that $\phi$ is not valid over monotone functions. By Lemma \ref{lem:alg}, we find a monotone, Boolean-valued function $h$ such that Eq. \eqref{eq:validity} becomes false. Consider the step function $s^{ \vect{U}}$, where $\vect{U}$ consists of all those variables $A_i$ that are mapped to $1$ by $h$. Clearly, $h$ and $s^{\vect{U}}$ agree on the right-hand side of Eq. \eqref{eq:validity}. Furthermore, for any set $\vect{Z}$, we have $s^{ \vect{U}}(\vect{Z}) \leq h(\vect{Z})$ by monotonicity of $h$. Consequently, Eq. \eqref{eq:validity} is also false for $s^{ \vect{U}}$, meaning that $\phi$ is not valid over step functions.
\end{proof}
It follows that validity for information inequalities of the form \eqref{eq:validity},
where $\vect{Y}_i$ are singletons, is decidable in polynomial time with respect to any $K$ such that $\step{n} \subseteq K \subseteq \monotone{n}$, including $K=\ent{n}$.
Note that simple $\Sigma$-inequalities are not of this form; rewritten in the form \eqref{eq:validity} one of the sets $\vect{Y}_i$ is the full variable set. However, as we will see next, it is possible to remove such terms in a single step.

%Theorem \ref{eq:ineq2} demonstrated that information inequalities having only terms $h(\vect{X})$, $|\vect{X}|\leq 2$, in the right-hand side can already be 

 Continuing our analysis of $\phi(\vect{X})$ of the form  \eqref{eq:validity}, fix  a variable $A$ from $\vect{X}$. Define sums
 \[
 c_A = \sum_{\substack{i\in [n]\\A \in \vect{X}_i}} c_i \quad\text{ and }\quad d_A = \sum_{\substack{i\in [n]\\A \in \vect{Y}_i}}d_i,
 \]
 and define the $A$-\emph{reduction} of $\phi$ as the inequality $\phi^{A}(\vect{X}\setminus \{A\})$ given as
 \begin{equation}\label{eq:validity2}
(c_A-d_A)h(\vect{X}\setminus\{A\}) + \sum_{\substack{i\in [n]\\A \notin \vect{X}_i}} c_ih(\vect{X}_i) \geq \sum_{\substack{i\in [n]\\A \notin \vect{Y}_i}} d_ih(\vect{Y}_i),
\end{equation}

\begin{lemma}\label{lem:redconj}
An information inequality $\phi(\vect{X})$ of the form \eqref{eq:validity} (having no restrictions on sets $\vect{Y}_i$)
 is valid over step functions if and only if
for all $A \in \vect{X}$, the $A$-reduction $\phi^{A}$ of $\phi$ is valid over step functions.
\begin{enumerate}
\item $c_A \geq d_A$, and
\item  the $A$-reduction $\phi^{A}(\vect{X}\setminus \{A\})$ is valid over step functions. %In particular, it follows from these statements that $w_A \geq 1$ for all $A \in \vect{X}$.
\end{enumerate}
\end{lemma}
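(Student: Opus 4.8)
The plan is to reduce validity over step functions to a purely combinatorial condition and then match that condition termwise against the $A$-reductions. First I would record the semantics of step functions: on $s^{\vect{V}}$ one has $s^{\vect{V}}(\vect{Z})=1$ exactly when $\vect{Z}\cap\vect{V}\neq\emptyset$ and $0$ otherwise. Hence $\phi$ of the form \eqref{eq:validity} is valid over step functions if and only if, for every $\vect{V}\subseteq\vect{X}$,
\[
\sum_{i:\,\vect{X}_i\cap\vect{V}\neq\emptyset} c_i \;\geq\; \sum_{i:\,\vect{Y}_i\cap\vect{V}\neq\emptyset} d_i .
\]
The case $\vect{V}=\emptyset$ is the trivial $0\geq 0$, so only nonempty $\vect{V}$ matter. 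The key bookkeeping step is the observation that every nonempty $\vect{V}$ contains at least one variable, so ``$\phi$ holds at all nonempty $\vect{V}$'' is equivalent to ``for every $A\in\vect{X}$, $\phi$ holds at all $\vect{V}$ with $A\in\vect{V}$''. This is precisely where the universal quantifier over $A$ in the statement originates.

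Second, I would fix $A$ and analyze the sets $\vect{V}\ni A$ by writing $\vect{V}=\vect{V}'\cup\{A\}$ with $\vect{V}'\subseteq\vect{X}\setminus\{A\}$. Since every set containing $A$ meets such a $\vect{V}$, the left-hand side splits as $\sum_{i:A\in\vect{X}_i}c_i=c_A$ plus $\sum_{i:A\notin\vect{X}_i,\ \vect{X}_i\cap\vect{V}'\neq\emptyset}c_i$, and symmetrically on the right with $d_A$. Thus the slack of $\phi$ at $\vect{V}'\cup\{A\}$ equals
\[
(c_A-d_A) + \sum_{i:A\notin\vect{X}_i,\ \vect{X}_i\cap\vect{V}'\neq\emptyset}c_i \;-\; \sum_{i:A\notin\vect{Y}_i,\ \vect{Y}_i\cap\vect{V}'\neq\emptyset}d_i .
\]
I would then evaluate the reduction $\phi^A$ at the step function $s^{\vect{V}'}$ over the variables $\vect{X}\setminus\{A\}$ and observe that its slack is the same expression except that $(c_A-d_A)$ is multiplied by the indicator of $\vect{V}'\neq\emptyset$. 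Consequently, for nonempty $\vect{V}'$ the two slacks coincide, so $\phi$ holds at $\vect{V}'\cup\{A\}$ iff $\phi^A$ holds at $\vect{V}'$; the one remaining case $\vect{V}'=\emptyset$ (that is, $\vect{V}=\{A\}$) contributes exactly the inequality $c_A\geq d_A$, whereas $\phi^A$ at $\emptyset$ is the trivial $0\geq 0$.

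Third, I would assemble these equivalences. For a fixed $A$, ``$\phi$ valid at all $\vect{V}\ni A$'' becomes ``$c_A\geq d_A$ and $\phi^A$ valid at all nonempty $\vect{V}'$'', and since $\phi^A$ is automatically valid at $\emptyset$ this is simply ``$c_A\geq d_A$ and $\phi^A$ valid over step functions''. Quantifying over all $A$ via the observation of the first paragraph then yields the claimed equivalence. I would also note that condition~(1) forces the coefficient $c_A-d_A$ of $h(\vect{X}\setminus\{A\})$ to be non-negative, so each $\phi^A$ is again of the form \eqref{eq:validity}; this is what makes the reduction composable with Theorem~\ref{lem:unary} (after one reduction the full-set term on the right is eliminated and only singletons remain).

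The main obstacle I anticipate is entirely in the accounting of the second paragraph: I must verify that the termwise correspondence is exact and, crucially, pin down the discrepancy at $\vect{V}'=\emptyset$, since that single boundary case is what produces condition~(1) and explains why a \emph{fixed}-$A$ version fails---a variable $A$ can satisfy both conditions while $\phi$ is falsified at some $\vect{V}$ that avoids $A$. Beyond the step-function semantics and the elementary fact that every nonempty $\vect{V}$ contains a variable, no further machinery is needed.
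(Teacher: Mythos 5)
Your proof is correct and follows essentially the same route as the paper's: the paper's (much terser) argument consists precisely of the two observations that $s^{\{A\}} \models \phi$ iff $c_A \geq d_A$ and that $s^{\vect{Y}\cup\{A\}} \models \phi$ iff $s^{\vect{Y}} \models \phi^A$ for nonempty $\vect{Y} \subseteq \vect{X}\setminus\{A\}$, combined with the covering observation that every nonempty $\vect{V}$ contains some $A$. Your explicit termwise accounting of the slack, including the isolation of the boundary case $\vect{V}'=\emptyset$ as the source of condition (1), simply fills in the details the paper leaves implicit.
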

\begin{proof}
 Note that $s^A %_{\vect{X}\setminus \{A\}} 
 \models \phi $ if and only if $c_A \geq d_A$. Also, 
 if $\emptyset \neq \vect{Y}\subseteq \vect{X}\setminus \{A\}$,  we have $s^{\vect{Y}\cup\{A\}} \models \phi$ if and only if $s^{\vect{Y}}\models \phi^A$, where $s^{\vect{Y}\cup\{A\}}$ and $s^{\vect{Y}}$ refer specifically to the set functions over $\vect{X}$ and $\vect{X}\setminus \{A\}$, respectively. The statement of the lemma follows.
\end{proof}
In particular, if each $\vect{Y}_i$ in the inequality \eqref{eq:validity} is either a singleton or the full set $\vect{X}$, then checking validity of this inequality reduces to checking validity of a linear number of inequalities in which the sets appearing in the right-hand side are all singletons. Theorems \ref{thm:monptime} and \ref{lem:unary}, and Lemma \ref{lem:redconj} thus immediately give us the following corollary.
\begin{corollary}
The information inequality problem w.r.t. simple $\Sigma$-inequalities is in polynomial time.
\end{corollary}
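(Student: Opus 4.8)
The plan is to reduce the problem, in polynomial time, to a linear number of instances of the information inequality problem over monotone functions, each solvable in polynomial time by Theorem~\ref{thm:monptime}. The first step is to put a given simple $\Sigma$-inequality $\phi$ into the normal form~\eqref{eq:validity}. Expanding each $h(\vect{V}\mid\vect{U}) = h(\vect{U}\vect{V}) - h(\vect{U})$ and rearranging $\sum_\sigma w_\sigma h(\sigma)\geq h(\vect{X})$ so that all coefficients are positive, the left-hand side collects the terms $w_\sigma h(\vect{U}\vect{V})$, while the right-hand side collects the target $h(\vect{X})$ together with the terms $w_\sigma h(\vect{U})$. Since $\Sigma$ is simple, each conditioning set $\vect{U}$ is either empty (contributing nothing to the right) or a singleton; hence every set $\vect{Y}_i$ appearing on the right-hand side of the resulting inequality~\eqref{eq:validity} is either a singleton or the full set $\vect{X}$. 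This rewriting, including the collection of like terms, is clearly polynomial.

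Because $\phi$ is a simple $\Sigma$-inequality, validity over entropic functions coincides with validity over step functions, via Theorem~\ref{thm:suciu} and the inclusions $\step{n}\subseteq\normal{n}\subseteq\ent{n}\subseteq\polym{n}$, so it suffices to decide validity over step functions. Here I would invoke Lemma~\ref{lem:redconj}: $\phi$ is valid over step functions if and only if, for every variable $A\in\vect{X}$, both $c_A\geq d_A$ holds and the $A$-reduction $\phi^A(\vect{X}\setminus\{A\})$ is valid over step functions. The observation driving the whole argument is that, for an inequality whose right-hand side sets are all singletons or the full set, each reduction $\phi^A$ has \emph{only} singletons on its right-hand side: both the full-set term $h(\vect{X})$ and the singleton $h(\{A\})$ contain $A$, so they are absorbed into the new left-hand side term $(c_A-d_A)h(\vect{X}\setminus\{A\})$, while the surviving right-hand side terms are exactly the singletons $h(\{B\})$ with $B\neq A$. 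The condition $c_A\geq d_A$ guarantees that this coefficient is non-negative, so after discarding any vanishing terms, $\phi^A$ is again an inequality of the form~\eqref{eq:validity} with positive coefficients and a purely singleton right-hand side.

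Each of the $n$ reduced inequalities $\phi^A$ is therefore exactly of the shape to which Theorem~\ref{lem:unary} applies, so validity of $\phi^A$ over step functions coincides with validity over monotone functions, which is decidable in polynomial time by Theorem~\ref{thm:monptime}. The overall procedure is then: rewrite $\phi$ into the form~\eqref{eq:validity}; for each $A\in\vect{X}$ check the arithmetic condition $c_A\geq d_A$ and run the monotone-function algorithm on $\phi^A$; and accept if and only if all these checks succeed. This invokes a linear number of polynomial-time subroutines, so the whole algorithm runs in polynomial time, yielding the corollary.

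I expect the crux to be the handling of the full-set term $h(\vect{X})$, rather than any delicate calculation. Theorem~\ref{lem:unary} applies only when every right-hand side set is a singleton, yet a simple $\Sigma$-inequality necessarily carries the target $h(\vect{X})$ on its right. The $A$-reduction of Lemma~\ref{lem:redconj} is precisely the device that removes this obstruction in a single step, and the point requiring care is to verify that after reduction the inequality genuinely lands in the singleton-right-hand-side fragment with non-negative coefficients, so that Theorems~\ref{lem:unary} and~\ref{thm:monptime} both become applicable.
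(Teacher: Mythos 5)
Your proposal is correct and follows exactly the paper's intended argument: rewrite the simple $\Sigma$-inequality into the form \eqref{eq:validity}, pass to step functions via Theorem \ref{thm:suciu}, apply the $A$-reductions of Lemma \ref{lem:redconj} to eliminate the full-set term $h(\vect{X})$ from the right-hand side, and then decide each resulting singleton-right-hand-side inequality over monotone functions via Theorem \ref{lem:unary} and Theorem \ref{thm:monptime}. The paper states this only as an immediate consequence of those three results, so your write-up simply makes explicit the same chain, including the correct observation that the check $c_A\geq d_A$ is what keeps the reduced coefficients non-negative.
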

One may recall from Theorem \ref{thm:deg} that, at least in the context of step functions, the requirement of $\Sigma$ being simple is necessary. 

We conclude this section by offering an alternative proof for the fact that the entropic bound for simple sets of conditionals $\Sigma$ is polynomial-time computable. In order to formulate this statement precisely, we need the concept of the logarithmic bound. Similarly to the degree values, the \emph{log-degree values} associated with $\Sigma$ are defined as a sequence $\vect{b}=(b_{\sigma})_{\sigma \in \Sigma}$, where $b_\sigma \geq 0$. A function $\vect{h}$ \emph{satisfies}  $(\Sigma, \vect{b})$, denoted $\vect{h}\models(\Sigma, \vect{b})$,   if $h(\sigma)\leq b_{\sigma}$ for all $\sigma \in \Sigma$. 
For a set $S \subseteq \mathbb{R}^{2^n}$, and a set of conditionals $\Sigma$ guarded by a query $Q(\vect{X})$ and associated with values $\vect{b}$, define the \emph{log-bound} of $Q$ w.r.t. $S$ as
 \[
 \logubound{S}(Q,\Sigma,\vect{b}) \coloneqq \inf_{\substack{\vect{w}\geq 0\\S\models \phi_{\Sigma}(\vect{X},\vect{w})}} \sum_{\sigma \in \Sigma} w_\sigma  b^{\sigma},
\]
where $\phi_{\Sigma}(\vect{X},\vect{w})$ is the $\Sigma$-inequality \eqref{eq:sigeq}.
It is known that the \emph{entropic log-bound} $\logubound{\ent{n}}$ is computable in polynomial time \cite{ngoarxiv22}. In the following, we present an alternative proof for this fact  via monotone functions.
%is asymptotically tight for $\log |Q(D)|$. We are interested in computing this bound for simple $\Sigma$.

\begin{theorem}
Let  $\Sigma$ be a set of conditionals that is guarded by a query $Q(\vect{X})$ and associated with values $\vect{b}$. If $\Sigma$ is simple,
 the entropic log-bound $\logubound{\ent{n}}(Q,\Sigma,\vect{b})$ is computable in polynomial time in the size of the input $(Q,\Sigma,\vect{b})$.
\end{theorem}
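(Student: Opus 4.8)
The plan is to express the log-bound as the optimal value of a polynomial-size linear program in the weight variables $\vect{w}=(w_\sigma)_{\sigma\in\Sigma}$ together with auxiliary variables, and then invoke the polynomial-time solvability of linear programming over the rationals. Unfolding the definition, $\logubound{\ent{n}}(Q,\Sigma,\vect{b})$ is the infimum of the linear objective $\sum_{\sigma\in\Sigma} w_\sigma b_\sigma$ over all $\vect{w}\geq 0$ for which the $\Sigma$-inequality $\phi_\Sigma(\vect{X},\vect{w})$ is valid over $\ent{n}$. The whole difficulty is that the constraint $\ent{n}\models\phi_\Sigma(\vect{X},\vect{w})$ is not, a priori, a system of linear inequalities in $\vect{w}$; the remedy is the chain of equivalences already developed for simple $\Sigma$. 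Since $\Sigma$ is simple, validity over $\ent{n}$ coincides with validity over step functions (the coincidence established around Theorem~\ref{thm:suciu}), so I may replace $\ent{n}$ by the step functions. Expanding each $h(\sigma)=h(\vect{U}\vect{V})-h(\vect{U})$ with $|\vect{U}|\le 1$ and moving the negative singleton terms to the right, $\phi_\Sigma(\vect{X},\vect{w})$ takes the form \eqref{eq:validity} in which every right-hand set $\vect{Y}_i$ is either a singleton or the full set $\vect{X}$, and, crucially, every coefficient is a linear expression in $\vect{w}$.

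Next I would linearize the validity condition. By Lemma~\ref{lem:redconj}, validity of this inequality over step functions is equivalent to the conjunction, over all $A\in\vect{X}$, of (i) $c_A\ge d_A$ and (ii) validity of the $A$-reduction $\phi^A$ over step functions. Both $c_A$ and $d_A$ are linear in $\vect{w}$, so (i) is a linear inequality. Moreover the $A$-reduction deletes every right-hand set containing $A$, in particular the full set $\vect{X}$, leaving $\phi^A(\vect{X}\setminus\{A\})$ with only singletons on its right-hand side and with coefficients that are again linear in $\vect{w}$. For such singleton-right-hand inequalities, Theorem~\ref{lem:unary} equates step-function validity with validity over monotone functions, and Lemma~\ref{lem:matrix} turns the latter into the feasibility of a system $M_A\vect{x}^A\ge \vect{v}_A$, $\vect{x}^A\ge 0$, where the matrix $M_A$ (entries in $\{-1,0,1\}$) is determined solely by the set structure of $\phi^A$ while the target vector $\vect{v}_A$ is linear in $\vect{w}$. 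Introducing a fresh block of variables $\vect{x}^A$ for each $A$, condition (ii) becomes a block of linear inequalities in $(\vect{w},\vect{x}^A)$.

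Putting the pieces together, the set of admissible weights $\{\vect{w}\ge 0 : \ent{n}\models\phi_\Sigma(\vect{X},\vect{w})\}$ is exactly the projection onto the $\vect{w}$-coordinates of the polyhedron cut out by $\vect{w}\ge 0$, the $n$ inequalities $c_A\ge d_A$, and the $n$ feasibility blocks $M_A\vect{x}^A\ge\vect{v}_A(\vect{w})$, $\vect{x}^A\ge 0$. This polyhedron has polynomially many variables and constraints (there are $O(|\Sigma|)$ distinct terms, each $M_A$ is of polynomial size by Lemma~\ref{lem:matrix}, and there are $n$ of them), and its description is computable in polynomial time. Since the objective $\sum_\sigma w_\sigma b_\sigma$ does not involve the auxiliary variables, minimizing it over the lifted polyhedron yields the same value as minimizing over the projection, i.e.\ the log-bound. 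The objective is bounded below by $0$ (as $w_\sigma,b_\sigma\ge 0$), so whenever the program is feasible its infimum is attained at a vertex and is rational; feasibility and the optimal value are both computable in polynomial time by any polynomial-time linear-programming algorithm over the rationals.

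I expect the main obstacle to be careful bookkeeping rather than a conceptually hard step: one must verify that after writing $\phi_\Sigma(\vect{X},\vect{w})$ in the form \eqref{eq:validity}, performing the $A$-reductions, and invoking Lemma~\ref{lem:matrix}, all the resulting coefficients and target vectors remain \emph{linear} in $\vect{w}$ (so that the final system is genuinely a linear program and not a bilinear one), and that the total number of variables and constraints stays polynomial across all $n$ reductions. A secondary point to handle cleanly is the passage from the infimum in the definition of $\logubound{\ent{n}}$ to an attained minimum, which follows from the boundedness of the objective together with the polyhedral description of the feasible region.
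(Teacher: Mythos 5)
Your proposal is correct and follows essentially the same route as the paper's proof: reduce validity over $\ent{n}$ to step functions via Theorem \ref{thm:suciu}, apply the $A$-reductions of Lemma \ref{lem:redconj}, pass to monotone functions via Theorem \ref{lem:unary}, and use Lemma \ref{lem:matrix} to assemble a single polynomial-size linear program in $(\vect{w},\vect{x})$ whose optimum is the entropic log-bound. The bookkeeping point you flag (keeping everything linear in $\vect{w}$) is handled in the paper exactly as you suggest, by treating the $w_\sigma$ as LP variables and folding them into the constraint matrix.
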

\begin{proof}
We construct a linear program that is polynomial in the size of the input and such that its optimal value is attained at the entropic log-bound.
Theorem \ref{thm:suciu} entails 
\begin{equation}\label{eq:first}
  \ent{n}\models\phi_{\Sigma}(\vect{X},\vect{w}) \iff \step{n}\models \phi_{\Sigma}(\vect{X},\vect{w}),
\end{equation}
where $\phi_{\Sigma}$ is the $\Sigma$-inequality \eqref{eq:sigeq}.
%that appears in the entropic bound.
%Suppose, at this point, that $\vect{w}$ is a rational vector.
Lemma \ref{lem:redconj} implies that 
\begin{equation}\label{eq:second}
\step{n}\models \phi_{\Sigma}(\vect{X},\vect{w})\iff \forall A \in \vect{X}:
 c_A \geq d_A\text{ and }\step{n-1} \models \phi^A_{\Sigma},
\end{equation}
where $c_A , d_A$ are the sums of coefficients $w_\sigma$ computed from $\phi_\Sigma$ for a variable $A$.
Since $\phi^A_{\Sigma}$ contain only singletons on their right-hand sides, Lemma \ref{lem:unary} yields
\begin{equation*}%\label{eq:third}
{\step{n-1}}\models \phi^A_\Sigma \iff {\monotone{n-1}}\models \phi^A_\Sigma.
\end{equation*}
By Theorem \ref{thm:monptime} we can construct in polynomial time matrices $M_A$ such that 
\begin{equation*}%\label{eq:third}
 {\monotone{n-1}}\models \phi^A_\Sigma \iff M_A \vect{x}_A \geq \vect{w}_A \text{ for some }\vect{x}_A\geq 0,
\end{equation*}
where $\vect{w}_A$ is a list (with possible repetitions) of coefficients $w_{\sigma}$ that appear in $\phi^A_{\Sigma}$. % has a solution $\vect{x}_A\geq 0$ if and only if  ${\monotone{n-1}}\models \phi^A$.
Note that we should now treat $w_{\sigma}$ as variables, since we are interested in optimizing their values.
Thus we rewrite $M_A \vect{x}_A \geq \vect{w}_A \land c_A \geq d_A$ as $M'_A \vect{x}_A\vect{w}_A \geq 0$, where $M'_A$ is obtained from $(M_A\mid -I_{|\vect{w}_A|})$ by adding one extra row to describe the inequality $c_A \geq d_A$.
Then we construct a single matrix $M^*$ such that $M^*\vect{x}\vect{w} = (\vect{x}_A\vect{w}_A)_{A\in \vect{X}}$, where $\vect{w}= (w_\sigma)_{\sigma \in \Sigma}$, and $\vect{x}$ is the concatenation of all $\vect{x}_A$. Finally, composing $M'_A$ diagonally into a single matrix $M_{\vect{X}}$, and writing $M_\sigma=M_{\vect{X}}M^*$, we obtain
\begin{equation}\label{eq:third}
\forall A \in \vect{X}:
 c_A \geq d_A\text{ and }\step{n-1} \models \phi^A_{\Sigma} \iff M_\Sigma \vect{x}\vect{w} \geq 0.
\end{equation}
By Eqs. \eqref{eq:first}, \eqref{eq:second},  and \eqref{eq:third}
we obtain
\[
\logubound{\ent{n}}(Q,\Sigma,\vect{b}) =\inf_{\substack{\vect{w}\geq 0\\\ent{n}\models \phi_{\Sigma}(\vect{X},\vect{w})}} \sum_{\sigma \in \Sigma} w_\sigma  b^{\sigma}= \min_{\substack{\vect{xw}\geq 0\\M_\Sigma \vect{x}\vect{w} \geq 0}} \sum_{\sigma \in \Sigma} w_\sigma  b^{\sigma}.
\]
Since $M_\Sigma$ can be constructed in polynomial time in the size of $(Q,\Sigma,\vect{b})$, we can compute in polynomial time the entropic log-bound $\logubound{\ent{n}}(Q,\Sigma,\vect{b})$ as the
optimal value of the linear program
\begin{alignat*}{9}
\text{minimize} &\quad  \sum_{\sigma \in \Sigma} w_\sigma  b^{\sigma} & \quad &  \\
\text{subject to} &\quad
M_\Sigma \vect{x}\vect{w}     \geq \vect{0} & \quad &\\
 & \quad \quad \hspace{2.1mm}  \vect{xw} \geq \vect{0}
\end{alignat*}
\end{proof}

\usetikzlibrary{shapes.geometric, calc}

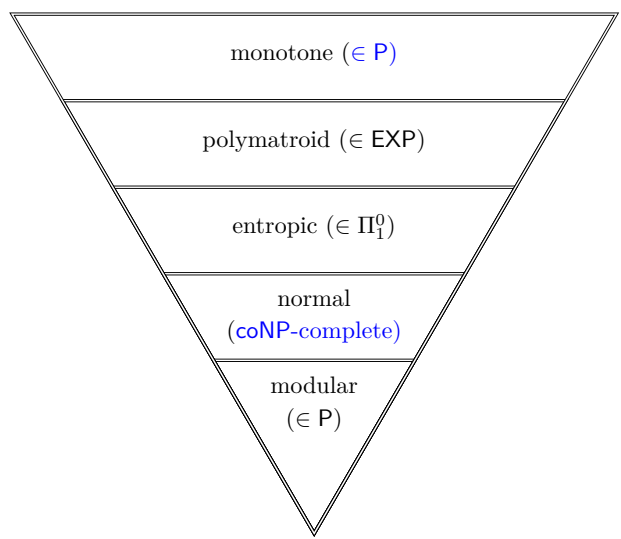
\begin{figure}
\centering
\scalebox{.9}{
\begin{turn}{180}
\begin{tikzpicture}
  \coordinate (sharedtip) at (0,0); % Define a shared tip
  \foreach \i in {2,3,4,5,6} {
    \pgfmathsetmacro\size{0.5*\i}
    \ifthenelse{\i = 2}{
    \node[draw,regular polygon, regular polygon sides=3, double, minimum size=\i *1.7cm,rotate=0, label={[yshift=.62cm]below:\rotatebox{180}{modular}}] at (sharedtip) {};
    }
    {
    	    \ifthenelse{\i = 3}{
		 \node[draw,regular polygon, regular polygon sides=3, double, minimum size=\i *1.7cm,rotate=0, label={[yshift=.6cm]below:\rotatebox{180}{normal}}] at (sharedtip) {};
		 }
		 {    	    \ifthenelse{\i = 4}{
			 \node[draw,regular polygon, regular polygon sides=3, double, minimum size=\i *1.7cm,rotate=0, label={[yshift=.9cm]below:\rotatebox{180}{entropic ($\in \Pi^0_1$)}}] at (sharedtip) {};
			 }
			 {\ifthenelse{\i = 5}{
			 			 \node[draw,regular polygon, regular polygon sides=3, double, minimum size=\i *1.7cm,rotate=0, label={[yshift=.9cm]below:\rotatebox{180}{polymatroid ($\in \EXP$)}}] at (sharedtip) {};
						 }
						 {
						 			 \node[draw,regular polygon, regular polygon sides=3, double, minimum size=\i *1.7cm,rotate=0, label={[yshift=.9cm]below:\rotatebox{180}{monotone ({\color{blue}$\in \P$)}}}] at (sharedtip) {};
			 
						 }
			 }
		 }
    }
    \coordinate (sharedtip) at ($(sharedtip) + (0, -.85cm)$); % Update the shared tip position
  }
  \coordinate (sharedtip) at ($(sharedtip) + (0, 6*.85cm)$); % Update the shared tip position
  \node at (0,-1.26) {\rotatebox{180}{({\color{blue}$\coNP$-complete)}}};
    \node at (0,0.03) {\rotatebox{180}{($\in \P$)}};
\end{tikzpicture}
\end{turn}
}
\caption{Complexity of the information inequality problem over normal/modular/ordinary polymatroids as well as monotone/entropic functions. The coloring denotes the results obtained in this paper. \label{fig:kooste}}
\end{figure}

\section{Conclusion}
The present paper marks the first attempt to demarcate the tractability boundary for different variants of the information inequality problem, introduced in \cite{KhamisK0S20}.
%we studied the information inequality problem in different syntactic and semantic scenarios. 
 We established that this problem is $\coNP$-complete over normal polymatroids, and in polynomial time over monotone functions (see Fig. \ref{fig:kooste} for a summary).
 %Motivated by applications in database theory,
 %we  studied $\Sigma$-inequalities \eqref{eq:sigeq}, where $\Sigma$ is a set of conditionals $(\vect{V}\mid \vect{U})$.
 Restricted to $\Sigma$-inequalities where $|\vect{U}|\leq 2$ for all $(\vect{V}\mid \vect{U})\in \Sigma$, we proved that the information inequality problem remains $\coNP$-hard over normal polymatroids. The same problem was shown to be in polynomial time over normal polymatroids, entropic functions, and polymatroids   if $|\vect{U}|\leq 1$. If every set in the right-hand side of Eq. \eqref{eq:validity} is a singleton or the full variable set,  we proved that the information inequality problem is in polynomial time over any $K$ that falls inbetween normal polymatroids and monotone functions. Using this result, we constructed an alternative proof for the polynomial-time computability of the entropic bound in the case where the set of conditionals $\Sigma$ is simple.

Several questions remain unanswered, however. To the best of the author's knowledge, no lower bounds have been established for $\iip$ over polymatroids or entropic functions. For example, it may be possible to prove an upper bound for $\iip$ over polymatroids that is better than the exponential time bound obtained through linear programming.
We known at least that every valid inequality over polymatroids is a positive combination of the polymatroidal axioms \cite{yeung08}. The problem is that the number of sets needed in the combination may be exponential in the number of variables. Theorem \ref{thm:monptime} demonstrates that if the space defined by an exponential number of axioms is simple enough, then every valid inequality decomposes into a positive combination in which no more than a polynomial number of sets appear.

\bibliography{lipics-v2021-sample-article,biblio}

\appendix

\section{Alternative $\coNP$-hardness proof}\label{sect:alt}
Recall that validity coincides for step functions and normal polymatroids, and thus it suffices to consider validity in the former sense.
%\begin{example}[Alternative proof of Theorem \ref{thm:strogn}]\label{ex:other}
We reduce from three-colorability. Let $G=(\vect{V},\vect{E})$ be a graph consisting of a vertex set $\vect{V}$ and a set of undirected edges $\vect{E}$.
For each node $A \in \vect{V}$, we introduce variables $A_r,A_g,A_b$ representing possible colors of $A$. Assume that the graph contains $n$ vertices. We define
\begin{equation}\label{eq:color2}
\sum_{\substack{c\in \{r,g,b\}\\A \in \vect{V}}} h(A_c) + 
\sum_{\substack{c,d \in \{r,g,b\}\\c\neq d\\A\in \vect{V}}} (2n+1)h(\vect{V}\mid A_cA_d)+\sum_{\substack{c \in \{r,g,b\}\\\{A,B\}\in \vect{E}}} (2n+1)h(\vect{V}\mid A_cB_c)
\geq (2n+1) h(\vect{V}).
\end{equation}
We claim that $G$ is three-colorable if and only if Eq. \eqref{eq:color2} is not valid over $\step{n}$. % $h\in \step

Assume first  Eq. \eqref{eq:color2} is not valid, and let $s_{\vect{U}}$, $\vect{U}\subseteq \vect{V}$, be a
step function such that Eq. \eqref{eq:color2} is false for $h=s_{\vect{U}}$. %We claim that $\vect{U}$ constitutes a coloring of the graph. 
 We claim that the function that maps each vertex $A$ to a color $c$ if $A_c \in \vect{U}$ is well-defined and constitutes a coloring of the graph.
Since the entropy and the conditional entropy
are non-negative for all step functions, we have $s_{\vect{U}}(\vect{V}) = 1$, and thus the right-hand side of Eq. \eqref{eq:color2} is $2n+1$. Consequently, the left-hand side is at most $2n$. From the first summation term, we obtain that $\vect{U}$ must contain at least $n$ elements. Moreover, each term of the form $h(\vect{V}\mid A_cA_d)$ or $h(\vect{V}\mid A_cB_c)$ must be zero. In particular, we have $A_cA_d\not\subseteq \vect{U}$ and $A_cB_c\not\subseteq \vect{U}$, which entails that each vertex is assigned exactly one color, and  no two vertices connected by an edge are assigned the same color. We conclude that the function defined by the step function is well defined and constitutes a graph coloring.

Assume then  Eq. \eqref{eq:color2} is valid. For each coloring of the vertices we may define a subset $\vect{U}\subseteq \vect{V}$ such that $A_c \in \vect{U}$ if and only if vertex $A$ is assigned color $c$. Then, the first summation term in the left-hand side of Eq. \eqref{eq:color2} is $n$, and the second summation term is zero. By hypothesis, some term of the form $h(\vect{V}\mid A_cB_c)$
must be non-zero, which means that there exists an edge whose endpoints are assigned the same color. This concludes the proof of the claim. 

Since the reduction is in polynomial time, and each coefficient is bounded by a polynomial in the input size,  strong $\coNP$-completeness again follows.\qed
%\end{example}

\section{Reduction from weakly $\coNP$-complete problem}\label{sect:weak}
\weak*
\begin{proof}
Since the upper bound is immediate, we focus on the lower bound.
The \emph{partition problem} is to decide whether a given multiset of positive integers $I$ can be partitioned to two submultisets $I_0$ and $I_1$ such that the sum of the integers in $I_0$ equals that the sum of the integers in $I_1$. This problem is $\NP$-complete, but only in the weak sense, as it can be solved in  polynomial time in the length of the input and the sum of the absolute values of the integers. We reduce from the complement of the partition problem.

Suppose $I=\{\{x_1, \dots,x_n\}\}$ is the input multiset (we use double brackets to differentiate a multiset from a set).
Without loss of generality the sum $\sum_{i=1}^n x_i$ of the integers in $I$ is even.
Each $x_i$ is a positive integer that possibly appears in $I$  multiple times. 
Let $m= x_1 + \dots +x_n$ be the total sum of the integers in $I$. Let $A_i$ be a variable for each $i\in [n]$, and denote by $\vect{X}$ the set of all variables. Consider an information inequality of the form
\begin{equation}\label{eq:partition}
((m/2)^2-1) h(\vect{X}) \geq \sum_{\substack{i,j\in [n]\\i\neq j}} x_ix_j(h(A_i\mid A_j)+h(A_j\mid A_i)).
\end{equation}
Note that each step function $s_{\vect{U}}$, $\vect{U}\subseteq \vect{V}$, corresponds to a partitioning of $I$. Moreover, $h(A_i\mid A_j)+h(A_j\mid A_i)=1$ if $A_i$ and $A_j$ appear on opposite sides of the partitioning, and otherwise $h(A_i\mid A_j)+h(A_j\mid A_i)=0$. The right-hand side of Eq. \eqref{eq:partition} thus becomes 
\[
(\sum_{A_i \in \vect{U}} x_i)(\sum_{A_i \in \vect{X}\setminus \vect{U}} x_i),
\]
which equals $(m/2)^2$ if $\sum_{\substack{i\in [n] \\A_i \in \vect{U}}}x_i = \sum_{\substack{i\in [n] \\A_i \in \vect{X}\setminus \vect{U}}}x_i $, and is otherwise bounded from above by $(m/2)^2-1$. It follows that Eq. \eqref{eq:partition} is not valid if and only if $I$ is a ``yes'' instance of the partition problem.
\end{proof}

\section{Completeness of fixed-point algorithm}\label{sect:alg}
\alg*
\begin{proof}
The implications $(3)\Rightarrow (1)$ and $(1)\Rightarrow (2)$ are immediate. We prove that $(2)\Rightarrow (3)$.  Clearly, if  this implication holds w.r.t. $c_i\in \mathbb Z$, then it holds w.r.t. $c_i\in \mathbb Q$.  We first prove the following claim. 
\begin{claim}
If the implication $(2)\Rightarrow (3)$ holds w.r.t. $c_i\in \mathbb Q$, then it holds w.r.t. $c_i\in \mathbb R$. 
\end{claim}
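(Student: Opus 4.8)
The plan is to argue topologically, viewing both conditions $(2)$ and $(3)$ as membership in explicit subsets of the coefficient space $\mathbb{R}^{2^n}$ and exploiting density of the rationals. First I would reformulate condition $(2)$. A monotone, Boolean-valued function is exactly the indicator of an upward-closed family $\mathcal{U}$ of subsets with $\emptyset \notin \mathcal{U}$, and evaluating $\phi$ on it yields $\sum_{i : \vect{X}_i \in \mathcal{U}} c_i$. Hence $\phi$ is valid over monotone, Boolean-valued functions if and only if its coefficient function $\vect{c}_{\phi}$ lies in $\mathcal{V} = \{\vect{c} : \sum_{\vect{X}_i \in \mathcal{U}} c_i \geq 0 \text{ for every up-set } \mathcal{U}\}$. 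Since there are only finitely many up-sets over $n$ variables, $\mathcal{V}$ is an intersection of finitely many half-spaces with $0/1$ normal vectors, i.e. a rational polyhedral cone; in particular its rational points $\mathcal{V} \cap \mathbb{Q}^{2^n}$ are dense in $\mathcal{V}$, since every real conical combination of the rational generators of $\mathcal{V}$ is approximated by rational conical combinations of them.

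Next I would analyze condition $(3)$, writing $\mathcal{D}$ for the set of coefficient functions that are positive and separable combinations of the (finitely many) monotonicity and non-negativity axioms. The key structural observation is that separability is a purely combinatorial, disjunctive constraint: at each coordinate $\vect{Z}$, the positive contributions come from axioms having $\vect{Z}$ as their \emph{top} set (a monotonicity axiom $h(\vect{Z})-h(\vect{B})\geq 0$ or the non-negativity axiom $h(\vect{Z})\geq 0$) and the negative contributions from monotonicity axioms having $\vect{Z}$ as their \emph{bottom} set (i.e.\ $h(\vect{A})-h(\vect{Z})\geq 0$), and separability forbids both kinds from occurring with positive coefficients simultaneously. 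For each assignment $\pi$ that chooses, per coordinate $\vect{Z}$, whether $\vect{Z}$ may receive only positive or only negative contributions, the combinations compatible with $\pi$ form a nonnegative combination of a fixed finite set of rational $\{0,\pm 1\}$-vectors, hence a rational polyhedral cone $C_{\pi}$ (a linear image of an orthant), and $\mathcal{D} = \bigcup_{\pi} C_{\pi}$ over the finitely many patterns $\pi$. A finite union of closed rational polyhedral cones is closed, so $\mathcal{D}$ is closed.

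With these two facts the claim follows quickly. By the hypothesis for rational coefficients, $\mathcal{V} \cap \mathbb{Q}^{2^n} \subseteq \mathcal{D}$; taking closures and combining density of $\mathcal{V} \cap \mathbb{Q}^{2^n}$ in $\mathcal{V}$ with closedness of $\mathcal{D}$ gives $\mathcal{V} = \overline{\mathcal{V} \cap \mathbb{Q}^{2^n}} \subseteq \overline{\mathcal{D}} = \mathcal{D}$, which is exactly the implication $(2)\Rightarrow(3)$ for real coefficients. The main obstacle I anticipate is establishing that $\mathcal{D}$ is closed: the separability requirement makes $\mathcal{D}$ non-convex, so one cannot treat it as a single polyhedral cone and must instead verify the decomposition into the finitely many pattern-cones $C_{\pi}$ (equivalently, argue that a convergent sequence of separable combinations has infinitely many members sharing one pattern, whose limit then inherits that pattern). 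Everything else — polyhedrality of $\mathcal{V}$, density of its rational points, and closedness of each $C_{\pi}$ — is routine.
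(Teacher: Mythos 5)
Your proposal is correct, but it reaches the conclusion by a genuinely different route than the paper. The paper argues sequentially and by hand: it fixes a sign-preserving monotone sequence of rational coefficient vectors $c^n_i \downarrow c_i$ (so each approximant $\phi_n$ stays valid over Boolean-valued monotone functions), applies the rational-coefficient hypothesis to decompose each $\phi_n$ over one fixed finite list of axioms $\psi_1,\dots,\psi_m$, uses separability to bound the decomposition coefficients by $d^n_l \leq c^n_i \leq c^1_i$, and then extracts a convergent subsequence via Bolzano--Weierstrass whose limit is the desired separable positive combination of $\vect{c}_{\psi_1},\dots,\vect{c}_{\psi_m}$. You instead package everything into two polyhedral facts: the set $\mathcal{V}$ of coefficient vectors valid over Boolean-valued monotone functions is a rational polyhedral cone, so its rational points are dense in it; and the set $\mathcal{D}$ of separable positive combinations is a finite union of finitely generated rational cones $C_\pi$ indexed by separability patterns, hence closed; the rational-case hypothesis gives $\mathcal{V}\cap\mathbb{Q}^{2^n}\subseteq\mathcal{D}$, and taking closures finishes. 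Your route is shorter and more conceptual once one grants that finitely generated cones are closed (Minkowski--Weyl), whereas the paper's route is more elementary and self-contained, using only Bolzano--Weierstrass; notably, the paper's explicit bound $d^n_l\leq c^1_i$, which it extracts from separability, is in effect a hands-on proof of exactly the closedness you invoke for the relevant pattern cone. The one step you rightly flag as delicate --- that $\mathcal{D}$ is non-convex and must be split into the cones $C_\pi$ before closedness can be asserted --- corresponds precisely to the point where the paper leans on separability to keep the $d^n_l$ bounded and to transfer separability to the limit, so the two arguments are essentially dual at that juncture.
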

\begin{claimproof}
To prove this, assume $\phi$ is valid over $\monotone{n}^{0,1}$.
Let $(\phi_n)$ be a sequence of information inequalities
\begin{equation}\label{eq:sequence}
c^n_1h(\vect{X}_1) + \dots +c^n_kh(\vect{X}_k)\geq 0 \quad (c^n_i\in \mathbb Q, n\geq 1),
\end{equation}
where $\lim_{n \to \infty} c_i^n=c_i$ and $c_i^{n} \geq c_i^{n+1} $. We may assume that $c_i^1$ is negative if $c_i$ is negative. That is, $(c^n_i)$ is a sequence of positive (resp. negative) values if $c_i$ is positive (resp. negative).
Clearly, if $\phi$ is valid over Boolean-valued, monotone functions, then so are $\phi_n$. By hypothesis, $\phi_n$ decompose into positive and separable combinations of  monotonicity and non-negativity axioms.
Writing $\vect{c}_{\phi}$ for the coefficient function arising from $\phi$, 
 we may write
\begin{equation}\label{eq:sepa}
\vect{c}_{\phi_n}= d^n_1\vect{c}_{\psi_1} + \dots + d^n_m\vect{c}_{\psi_m} \quad (d^n_i \geq 0),
\end{equation}
where $\psi_l$ list all possible monotonicity and non-negativity axioms respectively of the form $h(\vect{X}_i)\geq 0$ and $h(\vect{X}_i)- h(\vect{X}_j) \geq 0$, where $i,j\in [k]$ and $\vect{X}_j\subseteq \vect{X}_i$, excluding those $\psi_l$ for which the coefficient $d^n_l$ is always zero.
That is, the combinations \eqref{eq:sepa} are separable and have fixed length over all $n\geq 1$;
 recall that separability was defined with respect to terms having a non-zero coefficient.
% and the combinations obtained by discarding zero coefficients from \eqref{eq:sepa} are separable; in other words, the positive part of Eq. \eqref{eq:sepa} is separable, even though Eq. \eqref{eq:sepa} itself may not be.
%
Fix attention to an arbitrary $\psi_l$ being either of the form $h(\vect{X}_i)\geq 0$ or $h(\vect{X}_i)- h(\vect{X}_j) \geq 0$.
In this case, the coefficient function $\vect{c}_{\psi_l}$ maps $\vect{X}_i$ to $1$, that is,  $c_{\psi_l}(\vect{X}_i)=1$.
We claim that the coefficient $c_i$ of $h(\vect{X}_i)$ in Eq. \eqref{eq:inf2} is positive. %, that is, $c_{\phi}(\vect{X}_i)>0$. 
For this, consider some $p\geq 1$ such that the coefficient $d^p_l$ of $\vect{c}_{\psi_l}$ is strictly positive. Assume toward contradiction that $c_i$ is not positive, meaning that it is negative.
Then by construction, $c^p_i$ is negative (i.e., $c_{\phi_p}(\vect{X}_i)<0$), whence $c_{\psi_{l'}}(\vect{X}_i)<0$ for some $l'\neq l$ associated with a strictly positive coefficient $d^p_{l'}$. Since $c_{\psi_l}(\vect{X}_i)>0$, this contradicts  separability of  \eqref{eq:sepa}, proving our claim.
%and the coefficient $d^n_l$ of $\vect{c}_{\phi_l}$ in Eq. \eqref{eq:sepa} would need to be constant zero due to separability. % of Eq. \eqref{eq:sepa}.
The claim entails by construction that $c^n_i$ are positive %(i.e., $c_{\phi_n}(\vect{X}_i)>0$) 
 for all $n\geq 1$. Hence we obtain $d^n_l \leq c^n_i \leq c^1_i$ by separability of  \eqref{eq:sepa}. % implies $d^n_l = d^n_l c_{\phi_l}(\vect{X}_i)\leq c_{\phi_n}(\vect{X}_i)$

We conclude that $(\vect{d}_n)= (d^n_1, \dots ,d^n_m)$ is an infinite and bounded sequence of $\mathbb R^m$. The Bolzano-Weierstrass theorem entails that $(\vect{d}_n)$ has a subsequence $(\vect{d}_{n_p})$ that converges to some $\vect{d}=(d_1, \dots ,d_m)$. Obviously the vector $\vect{d}$ is non-negative.
By continuity,
\[
\vect{c}_{\phi} = \lim_{p \to \infty}\vect{c}_{\phi_{n_p}} = \lim_{p \to \infty} d^{n_p}_1\vect{c}_{\psi_1} + \dots + d^{n_p}_m\vect{c}_{\psi_m} = d_1\vect{c}_{\psi_1} + \dots + d_m\vect{c}_{\psi_m}.
\]
The obtained combination is separable, because otherwise some combination \eqref{eq:sepa} is not separable for large enough $n_p$, which leads to a contradiction. We conclude that $\phi$ is a positive and separable combination of monotonicity and non-negativity axioms, which shows that  $(2)\Rightarrow (3)$ w.r.t. $c_i\in \mathbb R$.
\end{claimproof}

It remains to prove that $(2)\Rightarrow (3)$ w.r.t. $c_i\in \mathbb Z$.
%We may assume without loss of generality that the coefficients are integral.
  Let $S=(S^+,S^-)$ be the set representation of
  $\phi$. % $\phi=c_1 \vect{X}_1 + \dots c_m \vect{X}_m \geq 0$
  %over $n$.
  %Let $\pi$ be an injective partial function from  $S^-$ to  $S^+$. 
  %
  We associate $S$ with a directed
  graph $G_{S}$, where
  \begin{itemize}
    \item the set of nodes are the elements of $S^+$ and $S^-$, and
    \item there is a directed edge from $(\vect{X},i)$ to $(\vect{Y},j)$ if $(\vect{X},i)\in S^+$, $(\vect{Y},j)\in S^-$, and $\vect{Y} \subseteq \vect{X}$.
    %\item there is a red edge from $(\vect{Y},j)$ to $(\vect{X},i)$ if $\pi((\vect{Y},j)) = (\vect{X},i)$.
  \end{itemize}
  
  Consider Alg. \ref{alg:mono} which maintains a bipartite directed graph $G$ that is initially set up as $G_{S}$.
  %until it finds a decomposition of $\phi$ into monotonicity and non-negativity axioms, or until it can no longer continue.
  The monotonicity axioms
  isolated at the current step are represented as
  directed edges going from $S^-$ to $S^+$ \emph{backward edges}; in the beginning no such edges have been introduced yet.
  %At each step, $G$ is bipartite: We keep fixed 
  The edges that proceed from $S^+$ to $S^-$ (\emph{forward edges}) are kept fixed.
  %, and only modify edges that proceed from $S^-$ to $S^+$. Henceforward we will call edges from $S^+$ to $S^-$ \emph{forward edges}, and those from $S^-$ to $S^+$ \emph{backward edges}.

We say that a node $u$ is \emph{connected} to a node $v$ in a directed graph if $u=v$, or there is a sequence of nodes (a \emph{path} from $u$ to $v$) in which the first node is $u$, the last node is $v$, 
   and each node is connected to the following node by a directed edge. A set  $\vect{U}$ is \emph{connected} to another set $\vect{V}$ is some node in $\vect{U}$ is connected to some node in $\vect{V}$.
  % a \emph{path} from $\vect{U}$ to $\vect{V}$ is a sequence of nodes in which the first node belongs to $\vect{U}$, the last node belongs to $\vect{V}$, 
  % and each node is connected to the following node by a directed edge. In this case, we also say that $\vect{U}$ is \emph{connected to} $\vect{V}$. If $\vect{U}$ (or $\vect{V}$) is a singleton, the above statements are written without the set parentheses.

  Consider the following claim. % that if the algorithm returns true, then $\phi$ is a positive and separable combination of the monotonicity and non-negativity axioms. %constitutes a positive and integral decomposition of $\phi$ to non-negativity and monotonicity axioms.
  % \squarebox{$\Rightarrow$}
  %Suppose the algorithm returns true, and
  \begin{claim}
  If Alg. \ref{alg:mono} returns true on $\phi$, then $\phi$ is a positive and separable combination of the monotonicity and non-negativity axioms.
  \end{claim}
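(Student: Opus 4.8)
The plan is to recognize Alg.~\ref{alg:mono} as the classical augmenting-path algorithm for bipartite matching, with $S^+$ and $S^-$ as the two sides, the forward edges of $G_S$ as the admissible (non-matching) edges, and the backward edges recording the current matching. The argument hinges on a loop invariant that I would state first: after each iteration of the while loop, the backward edges of $G$ form a matching $M$ of $G_S$, and $S_0$ (resp.\ $S_1$) is exactly the set of $S^+$-nodes (resp.\ $S^-$-nodes) left unmatched by $M$. This holds at initialization, where $M=\emptyset$, $S_0=S^+$ and $S_1=S^-$.

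For the inductive step I would analyze the flip performed on a path $u_0,\dots,u_m$ from $S_0$ to $S_1$. As $G_S$ is bipartite with $u_0\in S^+$ and $u_m\in S^-$, the path has odd length and alternates forward and backward edges, the first and last edges being forward because $u_0$ and $u_m$ are unmatched. Deleting the matching edges $(u_1,u_2),\dots,(u_{m-2},u_{m-1})$ and inserting $(u_1,u_0),\dots,(u_m,u_{m-1})$ reassigns every internal node to a fresh partner and matches the endpoints $u_0,u_m$; a direct check shows each node on the path ends with exactly one backward edge while off-path nodes are untouched, so $M$ stays a matching of size one larger, and $S_0,S_1$ stay synchronized with the unmatched vertices. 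Each inserted backward edge $(u_{2t+1},u_{2t})$ reverses a forward edge, hence the set indexing $u_{2t+1}\in S^-$ is a subset of the set indexing $u_{2t}\in S^+$, so it is a legitimate matching edge. This is the step I expect to be the main obstacle, since it is essentially a re-proof of the correctness of augmenting-path matching inside this specific encoding, together with the bookkeeping linking $S_0,S_1$ to the unmatched nodes.

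Granting the invariant, suppose the algorithm returns true. The loop halts only when no $S_0$-to-$S_1$ path remains, and returning true means $S_1=\emptyset$, so $M$ saturates $S^-$. I would then read off a combination: for each matched pair $(\vect{Y},j)\in S^-$ with partner $(\vect{X},i)\in S^+$ take the monotonicity axiom $h(\vect{X})-h(\vect{Y})\ge 0$ with coefficient $1$, and for each $S^+$-node $(\vect{X},i)$ left unmatched take the non-negativity axiom $h(\vect{X})\ge 0$ with coefficient $1$. Summing the coefficient functions, a fixed set $\vect{Z}$ receives $+1$ from each copy of $\vect{Z}$ in $S^+$ (every such node is the left-hand set of exactly one chosen axiom, monotonicity if matched and non-negativity otherwise) and $-1$ from each copy of $\vect{Z}$ in $S^-$ (every such node is matched and is the subtracted set of exactly one monotonicity axiom). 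Since $S^+$ holds $\max(c_i,0)$ copies and $S^-$ holds $\max(-c_i,0)$ copies of $\vect{X}_i$, the resulting coefficient of $\vect{X}_i$ equals $\max(c_i,0)-\max(-c_i,0)=c_i$, so the combination is exactly $\phi$.

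It remains to certify the two adjectives. Positivity is immediate, as every chosen axiom is taken with coefficient $1>0$. For separability, note that a set can occur with a negative coefficient only as the subtracted set of a monotonicity axiom, which forces it into $S^-$, and with a positive coefficient only as a left-hand set, which forces it into $S^+$; since the sign of $c_i$ places each $\vect{X}_i$ in at most one of $S^+,S^-$, no set appears both positively and negatively among the chosen axioms, and the combination is separable. This establishes the claim.
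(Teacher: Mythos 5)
Your proposal is correct and follows essentially the same route as the paper: the paper likewise treats the backward edges as a matching (asserting, rather than proving, that they form a bijection from $S^-\setminus S_1$ to $S^+\setminus S_0$), reads matched pairs as monotonicity axioms and the leftover $S_0$-nodes as non-negativity axioms, and concludes from $S_1=\emptyset$ that $\phi$ decomposes. Your explicit augmenting-path invariant and the coefficient bookkeeping simply fill in details the paper leaves to the reader.
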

  \begin{claimproof}
  Consider the graph $G$ and 
  the sets $S_0$ and $S_1$ after termination of the algorithm.
    Note that if $G$ contains a backward edge $(u,v)$, then the reverse edge $(v,u)$ forms a forward edge of $G_{S}$ and consequently corresponds to a monotonicity axiom.
    %The backward edges can thus be viewed as representing 
  %monotonicity axioms. Moreover, the 
  The backward edges also form a bijection from $S^-\setminus S_1$ to $S^+\setminus S_0$.
  Since $S_1$ is empty by assumption, and
   $S_0$ can be viewed as representing
   non-negativity axioms, it can now be observed that $\phi$ decomposes into a positive and separable combination of monotonicity and non-negativity axioms. 
   This proves the claim.
   %Therefore, $\phi$ is valid over $\monotone{n}$.
\end{claimproof}

We now prove the contraposition of $(2)\Rightarrow (3)$ w.r.t. $c_i\in \mathbb Z$. Suppose $\phi$ is not a positive and separable combination of the monotonicity axioms.
%\squarebox{$\Leftarrow$}
  %To prove the contraposition, suppose 
  The previous claim entails that the algorithm returns false. Consider again the graph $G$ and the sets $S_0,S_1$ after termination of the algorithm.
  Note that $S_1$ is now non-empty.
  Let $\vect{V}$ denote the set of variables appearing in $\phi$.
  Let $\calY$ be the (non-empty) collection of sets $\vect{Y}\subseteq \vect{V}$ such that for some  $j$, $(\vect{Y},j)$ belongs to $S^-$ and is connected to $S_1$. 
  Consider also its \emph{upper closure} $\calY^\uparrow \coloneqq \{\vect{Z} \subseteq \vect{V} \mid \exists \vect{Y} \in \calY: \vect{Y} \subseteq \vect{Z}\}$.  
  Define a mapping $h$ such that $h(\vect{Z})=1$ if $\vect{Z} \in \calY^\uparrow$, and otherwise $h(\vect{Z})=0$.
%
  %\begin{enumerate}[label={(\arabic*)}]
  %\item\label{it:1} $(\vect{Y},j)$ belongs to $S_1$, or 
  %\item\label{it:2} $G$ contains a red edge from $(\vect{Y},j)$ to $(U,i)$ and a path from $(U,i)$ to $S_1$.
  %\end{enumerate}
   Clearly, $h$ is a Boolean, monotone function. We  show
  that $h$ does not satisfy $\phi$. 
  
  Consider a pair $(\vect{X},i)\in S^+$
  such that $h(\vect{X})=1$. Then, $\vect{X}$ contains a set $\vect{Y}$ from $\cal{Y}$.
  Let $j$ be such that
   $(\vect{Y},j)$ belongs to $S^-$  and is connected to $S_1$.
  Since there is an edge from $(\vect{X},i)$ to $(\vect{Y},j)$, it follows that
  $(\vect{X},i)$ is connected to $S_1$.
  Now, if $(\vect{X},i)$ belonged to $S_0$, the algorithm could not have terminated yet.
  Hence $(\vect{X},i)$ must belong to $S^+ \setminus S_0$.  Recall that
  the backward edges form a bijection from $S^-\setminus S_1$ to $S^+\setminus S_0$.
  In particular, $(\vect{X},i)$ is the target of a unique backward edge with a source node $(\vect{Z},k)$.
  Since $(\vect{X},i)$ is connected to $S_1$, it follows that $(\vect{Z},k)$ is also connected to $S_1$. This entails that $h(\vect{Z})=1$. 
  In particular, this shows that any $(\vect{X},i)\in S^+$ such that $h(\vect{X})=1$ is paired 
  by a backward edge with a unique $(\vect{Z},k)\in S^-\setminus S_1$ such that $h(\vect{Z})=1$. In addition, because
  $S_1$ is non-empty, there exists an element $(\vect{U},l)\in S^-\cap S_1$  such that $h(\vect{U})=1$. In particular, $(\vect{U},l)$ is not the source node of any backward edge.
  These observations entail that $h$ does not satisfy $\phi$.
  This proves the contraposition of $(2)\Rightarrow (3)$ w.r.t. $c_i\in \mathbb Z$.
  
This concludes the proof of the direction $(2)\Rightarrow (3)$.

%   The number of iterations is at most $\min(|S^+|,|S^-|)$, which is the maximum number of possible backward edges. At each iteration, 
  %the size of $G$ is at most $|S^+| \cdot |S^-| + \min(|S^+|,|S^-|)$. Constructing a path from $S_0$ to $S_1$, or alternatively determining that no such path exists, can be done in polynomial time in the size of $G$. From these observations it follows that the algorithm runs in polynomial time in $|S^+|+|S^-|$, which is the coefficient size of the input.
\end{proof}
The following example demonstrates that Alg. \ref{alg:mono} correctly returns false on the submodularity axiom, as this axiom is not a consequence of monotonicity and non-negativity.
\begin{example}
The submodularity axiom ${X}{Y} + {X}{Z} - {X} - {X}{Y}{Z} \geq 0$ is not valid over monotone functions. This can be also seen by referring to Alg. \ref{alg:mono}.  The set representation is
$(S^+, S^-)$ where 
$S^+ = \{({X}{Y},1), ({X}{Z},1)\} \text{ and }
S^-=\{({X},1), ({X}{Y}{Z},1)\}$. Suppose at the first step the algorithm introduces a backward edge from $({X},1)$ to  $({X}{Y},1)$; the only other option is the symmetric scenario where it introduces an edge from $({X},1)$ to $({X}{Z},1)$. After the first step we have $S_0 =\{({X}{Z},1)\}$ and $S_1=\{({X}{Y}{Z},1)\}$. Then, no path exists from $S_0$ to $S_1$, since no forward edge points to $({X}{Y}{Z},1)$. The algorithm therefore terminates  returning false. Accordingly, the function that maps $XYZ$ to $1$ and all other sets to $0$ is monotone, Boolean-valued, and does not satisfy the aforementioned submodularity axiom.  
\end{example}

\end{document}